\newtheorem{theorem}{Theorem}
\newtheorem{lemma}{Lemma}
\newtheorem{remark}{Remark}
\newcolumntype{L}[1]{>{\centering}m{#1}}
\begin{document}
\title{Empirical Likelihood for Change Point Detection in Autoregressive Models}
\author{Ramadha D. Piyadi Gamage$^1$, Wei Ning$^2$\footnote{Corresponding author. Email:
wning@bgsu.edu} \\$^1$Department of Mathematics\\Western Washington University, Bellingham, WA 98225, USA\\
$^2$Department of Mathematics and Statistics\\Bowling Green State University, Bowling
Green, OH 43403, USA}
\date{}
\maketitle

\begin{abstract}
\noindent Change point analysis has become an important research topic in many fields of applications. Several research work has been carried out to detect changes and its locations in time series data. In this paper, a nonparametric method based on the empirical likelihood is proposed to detect the structural changes of the parameters in autoregressive (AR) models . Under certain conditions, the asymptotic null distribution of the empirical likelihood ratio test statistic is proved to be the extreme value distribution. Further, the consistency of the test statistic has been proved.  Simulations have been carried out to show that the power of the proposed test statistic is significant. The proposed method is applied to real world data set to further illustrate the testing procedure. \\
\noindent \textbf{Keywords}: Autoregressive model; Change point analysis; Empirical Likelihood; Extreme value distribution; Consistency.
\end{abstract}

\section{Introduction}
\noindent Change point analysis introduced by Page (1954, 1955) has become popular due to its usage in wide variety of fields, such as stock market analysis, quality control, traffic mortality rate, geology data analysis, genetics, etc. It concerns both detecting whether or not a change(s) has (have) occurred, and identifying the location(s) of any such change(s). Several methods to identify and estimate the change points in the change point problem are proposed by scholars. Bayesian approach to detect changes in the mean has been discussed by Chernoff and Zacks (1964) and Sen Srivastava (1975). Further, Cs\"org\'o and Horv\'ath (1997) and Chen and Gupta (2000) established asymptotic results on parametric change point models.  Hawkins (1977), Worsley (1986) and Gombay and Horv\'ath (1994) are a few among the many researchers who discussed change point problem  under the parametric settings. However, the parametric methods are no longer applicable if the underlying distribution is completely unknown. In such a case, a nonparametric approach should be considered as an alternative. One such popular nonparametric approach is the Cumulative Sum (CUSUM) method. Most authors have assumed that the observations are independent and studied the case where two distributions differ only in location. Combining nonparametric approaches along with the change point detection has been studied by many scholars over the past years. Aue and Horv\'ath (2012) discussed two methods, namely, Cumulative Sum (CUSUM) and Likelihood Ratio Test (LRT), on how they can be modified for data exhibiting serial dependence. Further, they provided some insight to the sequential procedure as well. Lee et. al. (2003) also discussed about the Cusum test for changes of parameters in time series models and considered the changes of the parameters in a random coefficient autoregressive model AR(1) and that of the autocovariances of a linear process.

The change point problem may be viewed as a two-sample test adjusted for the unknown break location, thus leading to max-type procedures. Correspondingly, asymptotic relationships are derived to obtain critical values for the tests. In general, the change point problem can described as follows. Let $ x_1,x_2,...,x_n $ be a sequence of independent random vectors (variables) with probability distribution functions $ F_1,F_2,...,F_n $, respectively. More specifically, suppose that the distributions $ F_1,F_2,...,F_n $ belong to a common parametric family F($\theta$), where $\theta \in R^p$, then the change point problem is to test the hypotheses about the population parameters $\theta_i , i=1,...,n $ $$H_0:\theta_1=\theta_2=...=\theta_n=\theta (unknown),$$ versus the alternative $$H_1:\theta_1=...=\theta_{k_1} \neq \theta_{k_1+1}=...=\theta_{k_2}\neq...\neq\theta_{k_{q-1}}=...=\theta_{k_q}\neq\theta_{k_q+1}...=\theta_n ,$$ where $q$ and $k_1,k_2,...,k_q$ are unknown and need to be estimated.

Empirical likelihood introduced by Owen (1988, 1990) is one of the popular and powerful nonparametric approaches. It has been widely used due to the robustness of its nonparametric nature and the efficiency of its likelihood construction. Kolaczyk (1994) used empirical likelihood with generalized linear models. Further, Qin and Lawless (1994) obtained estimating equations and derived asymptotic properties of the test statistic. Many scholars have discussed about the empirical likelihood ratio test for a change point in linear models, such as Zou et al. (2007) \nocite{zou2007empirical}, Liu et al. (2008) \nocite{liu2008empirical}, and Ning (2012). Since the empirical likelihood was originally proposed for independent data, it is difficult to apply it to dependent data such as time series data. Several approaches suggested to reduce the dependent data problem into an independent data problem. Owen (2001) suggested using the conditional likelihood to remove the dependence structure and generate the estimating equations. Kitamura (1997) used block-wise empirical likelihood method which preserves the dependence of data, and the resulting likelihood ratios have been used to construct asymptotically valid confidence intervals. Ogata (2005) and Nordman and Lahiri (2006) independently formulated a frequency domain empirical likelihood (FDEL) using spectral estimating equations which can be used for short- and long- range dependent data. Bai and Perron (1998) proposed CUSUM and F-based statistics for change point detection. Baragona et al. (2013) compared it with the test they proposed for change point detection based on the empirical likelihood approach for change point detection.

To deal with the situation of multiple changes, it traditionally uses the binary segmentation method proposed by Vostrikova (1981). The advantage of using this method is that it detects number of change points and estimates their locations simultaneously as well as the consistency of this method has been established. Hence, the general hypothesis of the change point problem can be simplified as the hypothesis of no change point versus a single change point, i.e. the alternative hypothesis is:
$$H_1:\theta_1=...=\theta_k\neq\theta_{k+1}=...=\theta_n,$$
where $k$ is the location of the single change point at this stage. If $H_0$ is not rejected, then the process is stopped and we conclude that there is no change. If $H_0$ is rejected, then there is a change point and the two subsequences before and after the change point found are tested for a change. This process is repeated until there are no subsequences having change points.

In this paper, we propose a test statistic based on the empirical likelihood approach for detecting changes in a time series model. In Section 2, the change point problem in time series models has been introduced for AR(p) model. The empirical likelihood procedure for change point detection is described in Section 3. The null asymptotic distribution of the test statistic and the consistency of the test along with the proofs are provided under Section 4. Simulations are carried out in Section 5 and a real data application is given in Section 6. Section 7 provides some discussion and proofs of results are given in the Appendix.

\section{Changepoint Problem in AR(p) Model}
\noindent Consider the stationary \textsc{ar}(p) model with the mean 0. 
$$X_t =
\begin{cases}
\sum_{i=1}^p \phi_i X_{t-i} + \epsilon_t; 1 \leq t \leq k \\
\sum_{i=1}^p \phi_i^* X_{t-i} + \epsilon_t; k+1 \leq t \leq n,
\end{cases}$$
where $\epsilon_t$'s are independent random variables with mean zero and variance $\sigma^2$, (i.e. White noise process),$\phi_1, \phi_2, ..., \phi_p, \phi_1^*, \phi_2^*, ..., \phi_p^*$ are all unknown parameters, and $k$ is the unknown change location which needs to be estimated. Denote $\boldsymbol{\delta}=\Phi^*-\Phi$, where $\Phi = (\phi_1, \phi_2, ..., \phi_p)^\prime$ and $\Phi^* = (\phi_1^*, \phi_2^*, ..., \phi_p^*)^\prime$. Therefore, the change point problem is to test the null hypothesis of no change in the autoregressive parameters versus the alternative hypothesis of one unknown change, i.e.,
$$H_0:\boldsymbol{\delta} = 0 \quad vs \quad H_1:\boldsymbol{\delta} \neq 0, \text{\small at least one $\delta_i \neq 0$}.$$
Hence, under the alternative hypothesis, there is a change in at least one of the $p$ parameters at an unknown location. We denote $\beta_0 = (\Phi, 0)^\prime \in \mathbb{R}^{2p}$ and $\beta_1 = (\Phi, \boldsymbol{\delta})^\prime\in \mathbb{R}^{2p}$ to be the parameter vectors under the null and the alternative hypothesis respectively. According to Owen (1991), we derive the estimating functions to be
\begin{align}
g_1(X_i, \beta_0) = (X_i, X_{i-1} \epsilon_i,..., X_{i-p} \epsilon_i, {\epsilon_i}^2 - \sigma^2),
\end{align}
where $\epsilon_i = X_i -  \sum_{r=1}^p \phi_i X_{i-r},$ and
\begin{align}
g_2(X_j, \beta_1) = (X_j, X_{j-1} \epsilon_j,..., X_{j-p} \epsilon_j, {\epsilon_j}^2 - \sigma^2),
\end{align}
where $\epsilon_j =  X_j - \sum_{r=1}^p \phi_j^* X_{j-r}.$ It is easy to see that
\begin{align*}
&E[g_1(X_i,\beta_0)]=0, \\
&E[g_2(X_j,\beta_1)]=0.
\end{align*}for every $1 \leq i \leq k$ and $k+1 \leq j \leq n$.

\section{Empirical Likelihood for AR(p) Changepoint Model}
WLOG, we assume one change point at an unknown location $k$. Let
\begin{align}
\Omega_{H_0} = \big\{ (p,q,\beta_0) | \sum_i p_i g_1(X_i,\beta_0)=\sum_j q_j g_2(X_j, \beta_0) = 0 \big\},
\end{align}
and
\begin{align}
\Omega_{H_1} = \big\{ (p,q,\beta_1) | \sum_i p_i g_1(X_i,\beta_0)=0, \sum_j q_j g_2(X_j, \beta_1) = 0 \big\}
\end{align}
be the parameter spaces under $H_0$ and $H_1$, respectively, where $p=(p_1, p_2, ..., p_k)$ and $q = (q_{k+1}, q_{k+2},..., q_n)$ are the probability vectors such that $\sum_{i=1}^{k}p_i = 1$, $\sum_{j=k+1}^{n}q_j = 1$ and $p_i \geq 0$, $q_j \geq 0$. If a change occurs at $k$, then the empirical likelihood ratio test statistic is defined as,
\begin{align*}
-2 \log \Lambda_k &= -2 \log \frac{\underset{H_0}{\sup} \left \{ \prod_{i} p_i \prod_{j} q_j | (p,q,\beta_0) \in \Omega_{H_0} \right \}}{\underset{H_1}{\sup} \left \{ \prod_{i} p_i \prod_{j} q_j | (p,q,\beta_1) \in \Omega_{H_1} \right \}}\\
&= Z_{H_0,k} - Z_{H_1,k},
\end{align*}
where
$$Z_{H_0,k} = -2 \underset{H_0}{\sup} \left \{ \sum_i \log kp_i + \sum_j \log (n-k) q_j | (p,q,\beta_0) \in \Omega_{H_0} \right \}, $$
$$Z_{H_1,k} = -2 \underset{H_1}{\sup} \left \{ \sum_i \log kp_i + \sum_j \log (n-k) q_j | (p,q,\beta_1) \in \Omega_{H_1} \right \}. $$
The null hypothesis is rejected for a sufficiently large value of $\underset{1<k<n}{\max}-2 \log \Lambda_k$. Let $\theta_{nk} = \frac{k}{n}$. A Lagrangian argument gives,
$$p_i = \frac{1}{n\theta_{nk} ( 1 + \theta_{nk}^{-1}\lambda_1^\prime g_1(X_i, \cdot))} $$
and
$$q_j = \frac{1}{n(1-\theta_{nk}) ( 1 + {(1-\theta_{nk})}^{-1}\lambda_2^\prime g_2(X_j, \cdot))} $$
where $\lambda_1$ and $\lambda_2$ are chosen such that $\sum_i p_i g_1(X_i,\cdot)= 0$ and $\sum_j q_j g_2(X_j, \cdot) = 0$. Therefore,
under $H_0,$ we obtain
$$Z_{H_0,k} = 2 \underset{\beta_0}{\inf} \underset{\lambda_1, \lambda_2}{\sup}\left\{ \sum_i \log(1 + \theta_{nk}^{-1} \lambda_1^\prime g_1(X_i,\beta_0)) + \sum_j \log(1 + {(1 - \theta_{nk})}^{-1} \lambda_2^\prime g_2(X_i,\beta_0)) \right\}.$$
Let $\lambda = (\lambda_1^\prime, \lambda_2^\prime)^\prime$. The score functions are defined as:
$$Q_{1n}(\beta_0,\lambda) = \frac{\partial Z_{H_0,k}}{\partial \lambda} = \frac{1}{n} \sum_m \frac{1}{1 + \theta_m^{-1} \lambda^\prime g(X_m,\beta_0)} \theta_m^{-1} g(X_m,\beta_0),$$
and
$$Q_{2n}(\beta_0,\lambda) = \frac{\partial Z_{H_0,k}}{\partial \beta_0} = \frac{1}{n} \sum_m \frac{1}{1 + \theta_m^{-1} \lambda^\prime g(X_m,\beta_0)} \theta_m^{-1} \bigg(\frac{\partial g(X_m, \beta_0)}{\partial \beta_0}\bigg)^\prime g(X_m,\beta_0), $$
where
\begin{align*}
&\theta_m^{-1} = \theta_{nk}^{-1} \mathbf{1}_{\{1\leq m \leq k\}} + {(1-\theta_{nk})}^{-1} \mathbf{1}_{\{k+1 \leq m \leq n\}},\\
&g(X_m, \beta_0) = g_1(X_m,\beta_0) \mathbf{1}_{\{1\leq m \leq k\}} + g_2(X_m, \beta_0) \mathbf{1}_{\{k+1 \leq m \leq n\}}.
\end{align*}
Under certain regularity conditions, Qin and Lawless (1994) showed, there exists ($\tilde{\beta_0},\tilde{\lambda}$) such that,
$$Q_{1n}(\tilde{\beta_0},\tilde{\lambda}) = 0 \text{ and } Q_{2n}(\tilde{\beta_0},\tilde{\lambda}) = 0.$$
Hence, we obtain $Z_{H_0,k} = 2l_E(\tilde{\Phi}^\circ, \tilde{\mu}^\circ, 0)$.

Similarly, under $H_1$ we have, $Z_{H_1,k} = 2l_E(\tilde{\Phi}, \tilde{\mu}, \delta)$. Then the empirical likelihood ratio statistic can be rewritten as
\begin{align}
-2 \log \Lambda_k = 2l_E(\tilde{\Phi}^\circ, \tilde{\mu}^\circ, 0) - 2l_E(\tilde{\Phi}, \tilde{\mu}, \delta).
\end{align}
Since $k$ is unknown, $H_0$ is rejected when the maximally selected log-likelihood ratio statistic,
$$Z_n = \underset{\theta_{nk} \in \Theta_n}{\max} \{-2 \log \Lambda_k\},$$
where $\Theta_n = \{k/n: k=1,2,...,n\}$, is sufficiently large.

When $k$ or $n-k$ is too small, then the minimax estimators of empirical likelihood $(\tilde{\beta}_1, \tilde{\lambda})$ may not exist. Hence we consider the trimmed likelihood ratio statistic where the range of $k$ is selected arbitrarily as follows. The Trimmed likelihood ratio statistic is defined as,
\begin{equation}
Z_n^* = \underset{\theta_{nk} \in \Theta_n^*}{\max} \{-2 \log \Lambda_k\}, 
\end{equation}
where $\Theta_{nk}^* =\{k/n :k = n_{T_1}, n_{T_1}+1,..., n - n_{T_2}\}$. According to Perron and Vogelsang (1992), the selection of $n_{T_1}$ and $n_{T_2}$ can be arbitrary. In our work, we choose $n_{T_1} = n_{T_2} = 2 [n^{\frac{1}{2}}]$, where $[x]$ means the largest integer not larger than $x$. If $H_0$ is true, then $Z_n^*$ follows an asymptotic extreme value limit distribution.  The convergence to the extreme value limit can be slow and asymptotic test often tends to be too conservative in finite samples.

\section{Main Results}
The results are similar to the ones by Cs\"{o}rg\'{o} and Horv\'{a}th (1997). Under mild regularity conditions, the following theorem holds.
\begin{theorem}
Let $\beta^*$ be the true parameter. Suppose that $E{||g(X,\beta^*)||}^3 < \infty$, $E||\epsilon||^4 < \infty$, and $E(\epsilon \epsilon^\prime)$ is positive definite. If $H_0$ is true, then we have
$$\underset{n\rightarrow \infty}{\lim} Pr\{A(\log u(n)) {(Z_n^*)}^{\frac{1}{2}} \leq t + D_r(\log u(n))\} = \exp(-e^{-t})$$
for all t, where $A(x)=(2\log x)^{\frac{1}{2}}$, $D_r(x)=2 \log x + (r/2) \log \log x - \log \Gamma (r/2)$, $u(n) = \frac{n^2 + (2 \lfloor n^{\frac{1}{2}} \rfloor)^2 - 2n \lfloor n^{\frac{1}{2}} \rfloor}{(2 \lfloor n^{\frac{1}{2}} \rfloor)^2}$, and $r$ is the dimension of the parameter $\boldsymbol{\delta}$.
\end{theorem}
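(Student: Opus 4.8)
The plan is to follow the three-step strategy of Cs\"org\'o and Horv\'ath (1997) for maximally selected statistics: (i) a uniform quadratic expansion of $-2\log\Lambda_k$ as a standardized partial-sum ``bridge'', (ii) a Gaussian strong approximation of that partial sum, and (iii) the classical extreme-value theorem for the maximum of the resulting standardized Gaussian process.

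For step (i), note that stationarity of the \textsc{ar}$(p)$ model makes each coordinate of $g(X_i,\beta^*)$ either a martingale difference (the $X_{i-r}\epsilon_i$ and $\epsilon_i^2-\sigma^2$ components) or a short-memory linear process in $\{\epsilon_t\}$, so $\{g(X_i,\beta^*)\}$ obeys a central limit theorem and a strong invariance principle; the hypotheses $E\|g(X,\beta^*)\|^3<\infty$ and $E\|\epsilon\|^4<\infty$ make the limiting covariance $\Sigma$ finite and positive-definiteness of $E(\epsilon\epsilon')$ makes its relevant block invertible. For every $k\in\Theta_n^*$ both segment lengths exceed $2[n^{1/2}]\to\infty$, so the Qin--Lawless (1994) argument yields the minimax solutions $(\tilde\beta_0,\tilde\lambda)$ and $(\tilde\beta_1,\tilde\lambda)$. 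Expanding $Z_{H_0,k}$ and $Z_{H_1,k}$ to second order in $\lambda$ about $0$ and in the nuisance coordinates $(\Phi,\mu,\sigma^2)$ about $\beta^*$ and absorbing the cubic remainder via $E\|g\|^3<\infty$, one should obtain, under $H_0$,
$$-2\log\Lambda_k=\frac{n}{k(n-k)}\Big(S_k-\tfrac{k}{n}S_n\Big)'\Sigma^{-1}\Big(S_k-\tfrac{k}{n}S_n\Big)+R_{n,k},\qquad S_k=\sum_{i=1}^k h(X_i,\beta^*),$$
where $h$ is the influence function of $\boldsymbol\delta$ with the nuisance directions projected out, so that $\Sigma$ has dimension $r=\dim\boldsymbol\delta$.

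The main obstacle is to show $\sup_{k\in\Theta_n^*}|R_{n,k}|=o_P\big((\log\log n)^{-1}\big)$ --- the accuracy that survives the extreme-value normalization, whose multiplicative scale is $A(\log u(n))\asymp(\log\log n)^{1/2}$. Near the edges of the trimmed band $k\wedge(n-k)$ is only of order $n^{1/2}$, so the ordinary $O_P\big((k\wedge(n-k))^{-1/2}\big)$ bounds on the Taylor remainders decay far too slowly and must be replaced by maximal/exponential inequalities for the partial sums that exploit the martingale and linear-process structure together with $E\|g\|^3<\infty$; one must also check that profiling out $(\Phi,\mu,\sigma^2)$ disturbs neither the leading quadratic form nor this rate.

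For step (ii), a Koml\'os--Major--Tusn\'ady-type strong approximation couples $\{S_{\lfloor nt\rfloor}:0\le t\le1\}$ with a $\Sigma$-Wiener process at rate $o_P\big((\log\log n)^{-1/2}n^{1/2}\big)$ after standardization, so that uniformly over $\theta_{nk}\in\Theta_n^*$
$$-2\log\Lambda_k=\frac{\big\|W(\theta_{nk})-\theta_{nk}W(1)\big\|^2}{\theta_{nk}(1-\theta_{nk})}+o_P\big((\log\log n)^{-1}\big),$$
with $W$ a standard $r$-dimensional Wiener process. For step (iii), the time change $s=e^{y}/(1+e^{y})$ turns $\big(W(s)-sW(1)\big)/\sqrt{s(1-s)}$ into an $r$-dimensional Ornstein--Uhlenbeck process with correlation $e^{-|y-y'|/2}$; the trimming $n_{T_1}=n_{T_2}=2[n^{1/2}]$ confines $s$ to $[s_n,1-s_n]$ with $s_n=2[n^{1/2}]/n$, so $y$ ranges over an interval of length $2\log\big((1-s_n)/s_n\big)=\log u(n)\,(1+o(1))$. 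The Darling--Erd\H{o}s / Cs\"org\'o--Horv\'ath extreme-value theorem for the Euclidean norm of such a process over an interval of length $L$ gives $\Pr\{A(L)\max_{[0,L]}\|Y\|-D_r(L)\le t\}\to\exp(-e^{-t})$ with $A(L)=(2\log L)^{1/2}$ and $D_r(L)=2\log L+(r/2)\log\log L-\log\Gamma(r/2)$, the $\chi^2_r$ distribution of $\|Y\|^2$ producing the $(r/2)\log\log$ and $\log\Gamma(r/2)$ corrections. Since $(Z_n^*)^{1/2}$ equals the maximum over $k\in\Theta_n^*$ of $\|W(\theta_{nk})-\theta_{nk}W(1)\|/\sqrt{\theta_{nk}(1-\theta_{nk})}$ up to an $o_P\big(A(\log u(n))^{-1}\big)$ error and the interval length is $\log u(n)(1+o(1))$, substituting $L=\log u(n)$ and collecting the negligible terms gives
$$\lim_{n\to\infty}\Pr\Big\{A(\log u(n))\,(Z_n^*)^{1/2}\le t+D_r(\log u(n))\Big\}=\exp(-e^{-t})$$
for all $t$, as claimed.
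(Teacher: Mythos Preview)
Your proposal is correct and follows essentially the same route as the paper. The paper packages your step~(i) into a sequence of preparatory lemmas---existence and $O_p(m^{-1/2})$ rate of the minimax solution $(\tilde\beta,\tilde\lambda)$ via a Qin--Lawless argument, a quadratic representation $-2\log\Lambda_k=\big[(-S_{11})^{-1/2}\sqrt{n}\,Q_{1n}(\beta_0,0)\big]'\Delta\big[(-S_{11})^{-1/2}\sqrt{n}\,Q_{1n}(\beta_0,0)\big]+O_p(m^{-1/2})$ with $\Delta$ an idempotent of rank $r$ (this is your profiled bridge form in different coordinates), and a maximal lemma controlling $\sup_k|R_{n,k}|$ at the $(\log\log n)^{-1}$ scale---and then, exactly as you do, invokes Cs\"org\H{o}--Horv\'ath (1997, Theorem~1.3.1/A.3.4) for the strong approximation and the Darling--Erd\H{o}s extreme-value limit; your identification $2\log\big((1-s_n)/s_n\big)=\log u(n)$ is precisely how the trimming constants $n_{T_1}=n_{T_2}=2[n^{1/2}]$ enter the normalization.
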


\begin{theorem}
Under the conditions of Theorem 1 and the condition that for every fixed parameter $\delta=\beta^*-\beta\neq0$, there exists a positive constant $c_0>0$ satisfy that $\infty > \smash{\displaystyle\inf_{\delta\neq 0}} \smash{\displaystyle\sup_{\lambda}} E\log \big[ 1 + \lambda^\prime x(x^\prime \delta + e) \big] \geq c_0 >0$, holds. If $H_1$ is true, assume that $\theta_{k_0} = \frac{k_0}{n} \rightarrow \theta_0\in (0,1)$ as $n\rightarrow\infty$, then ELR test statistic is consistent, i.e. there exists a constant $c>0$ such that
$$P(Z_n >cn) \rightarrow 1.$$
\end{theorem}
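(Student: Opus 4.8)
The plan is to exhibit a single index $k$ at which the log-likelihood ratio is already of order $n$; the natural candidate is the true change location $k_0$. Since $k_0/n\in\Theta_n$ (and, for large $n$, also $\Theta_n^*$, because $\theta_0\in(0,1)$), we have $Z_n\ge -2\log\Lambda_{k_0}=Z_{H_0,k_0}-Z_{H_1,k_0}$, so it suffices to prove that, with probability tending to one, $Z_{H_0,k_0}\ge c_1 n$ for some $c_1>0$ while $Z_{H_1,k_0}=o_p(n)$; then $Z_n\ge(c_1/2)n$ eventually and one may take $c=c_1/2$.

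First I would dispose of $Z_{H_1,k_0}$. At $k=k_0$ the estimating functions $g_1(\cdot,\beta_1)$ on $\{1,\dots,k_0\}$ and $g_2(\cdot,\beta_1)$ on $\{k_0+1,\dots,n\}$ are correctly specified at the true value $\beta_1=(\Phi,\boldsymbol{\delta})$, so $Z_{H_1,k_0}$ is an ordinary (over-identified) empirical likelihood statistic built from two correctly centered blocks. Under $E\|g(X,\beta^*)\|^3<\infty$, $E\|\epsilon\|^4<\infty$ and $E(\epsilon\epsilon^\prime)$ positive definite, the Qin and Lawless (1994) / Owen (1990) expansions apply to each block and give $Z_{H_1,k_0}=O_p(1)=o_p(n)$.

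The substance is the lower bound on $Z_{H_0,k_0}$. Recall that
$$Z_{H_0,k_0}=2\underset{\beta_0}{\inf}\ \underset{\lambda_1,\lambda_2}{\sup}\left\{\sum_{i\le k_0}\log\!\big(1+\theta_{nk_0}^{-1}\lambda_1^\prime g_1(X_i,\beta_0)\big)+\sum_{j>k_0}\log\!\big(1+(1-\theta_{nk_0})^{-1}\lambda_2^\prime g_2(X_j,\beta_0)\big)\right\},$$
with $\beta_0=(\Phi_0,0)$. The supremum over $(\lambda_1,\lambda_2)$ separates across the two blocks, and each block's inner supremum is at least its value at $\lambda_i=0$, namely $0$; hence $Z_{H_0,k_0}$ is bounded below by twice the contribution of whichever of the two blocks we choose. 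Now fix $\beta_0$. Because the true discrepancy $\boldsymbol{\delta}=\Phi^*-\Phi\neq0$, $\Phi_0$ cannot coincide with both $\Phi$ and $\Phi^*$, so on at least one block the residual formed with $\Phi_0$ equals $e_t+x_t^\prime\delta$ for a nonzero block discrepancy $\delta$, where $x_t=(X_{t-1},\dots,X_{t-p})^\prime$, and the corresponding components of that block's estimating function take the form $x(x^\prime\delta+e)$ appearing in the hypothesis. By the assumed bound, the per-observation population optimum $\sup_\lambda E\log[1+\lambda^\prime x(x^\prime\delta+e)]\ge c_0>0$ for that $\delta$. Evaluating the empirical block sum at the population-optimal $\lambda$ and applying a law of large numbers for the stationary ergodic sequence --- made uniform in $\beta_0$ by a stochastic equicontinuity (bracketing) argument over a compact set of $\Phi_0$, together with a coercivity estimate ruling out the region $\|\Phi_0\|\to\infty$ --- upgrades this to: uniformly in $\beta_0$, the misspecified block contributes at least $c_0\min(\theta_0,1-\theta_0)\,n\,(1+o_p(1))$. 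Thus $Z_{H_0,k_0}\ge c_1 n$ with probability tending to one, where $c_1=c_0\min(\theta_0,1-\theta_0)$.

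Combining the two bounds, $Z_n\ge Z_{H_0,k_0}-Z_{H_1,k_0}\ge c_1 n-o_p(n)$, which exceeds $(c_1/2)n$ with probability tending to one, and $c=c_1/2$ finishes the proof. I expect the main obstacle to be the uniformity in the third step: for a \emph{single} $\beta_0$ the inequality is just the stated hypothesis plus a scalar law of large numbers, but $Z_{H_0,k_0}$ carries an infimum over $\beta_0$, so the block sums and their inner suprema over $\lambda$ must be controlled simultaneously for all $\beta_0$ --- this is precisely where the moment assumptions of Theorem 1 and the uniform-in-$\delta$ hypothesis of Theorem 2 do the work.
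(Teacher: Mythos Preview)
Your approach is the same as the paper's: bound $Z_n$ from below by $-2\log\Lambda_{k_0}=Z_{H_0,k_0}-Z_{H_1,k_0}$, dispose of $Z_{H_1,k_0}=O_p(1)$ via the standard Qin--Lawless asymptotics at correctly specified blocks, and then show $Z_{H_0,k_0}\ge c_1 n$ by applying the law of large numbers to the misspecified block together with the $c_0$-hypothesis. The paper's write-up is terser and less careful on exactly the point you flag: it effectively fixes a single parameter value (the one that makes the first block correctly centered, so Jensen's inequality forces that block's supremum over $\lambda_1$ to be exactly $0$) and then reads off the second-block lower bound $(1-\theta_0)c_0$, without explicitly addressing the infimum over $\beta_0$ in $Z_{H_0,k_0}$. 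Your observation that at least one of the two blocks must be misspecified for every $\beta_0$, and your use of $c_0\min(\theta_0,1-\theta_0)$ as a worst-case constant, is the cleaner way to handle that infimum; your identification of the uniformity in $\beta_0$ as the genuine technical step is accurate and is something the paper does not make explicit.
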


\begin{theorem}
Under the conditions of Theorem 1 and the condition that for every fixed parameter $\delta=\beta^*-\beta\neq0$, there exists a positive constant $c_0>0$ satisfy that $\infty > \smash{\displaystyle\inf_{\delta\neq 0}} \smash{\displaystyle\sup_{\lambda}} E\log \big[ 1 + \lambda^\prime x(x^\prime \delta + e) \big] \geq c_0 >0$, holds. If $H_1$ is true, assume that $\theta_{k_0}= \frac{k_0}{n} \rightarrow \theta_0 \in (0,1)$ as $n\rightarrow \infty$, we have $\hat{\theta}_k \rightarrow\theta_0$ in probability as $n\rightarrow \infty$.
\end{theorem}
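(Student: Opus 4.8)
The plan is to identify $\widehat\theta_k$ as the arg-max of a random contrast function and then run a standard arg-max argument. Write $R_n(k)=-2\log\Lambda_k=Z_{H_0,k}-Z_{H_1,k}\ge 0$, let $\widehat k\in\arg\max_k R_n(k)$ (over the admissible, trimmed range, so the estimators exist), and $\widehat\theta_k=\widehat k/n$. I would construct a deterministic continuous $\rho:(0,1)\to[0,\infty)$ with $\rho(\theta_0)>0$, with $\rho(\theta)\to 0$ as $\theta\to 0$ or $\theta\to 1$, and with $\rho(\theta)<\rho(\theta_0)$ for every $\theta\ne\theta_0$, such that $n^{-1}R_n(\lfloor n\theta\rfloor)\to\rho(\theta)$ in probability, uniformly for $\theta$ in compact subsets of $(0,1)$. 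Given this, $\widehat k/n$ lies in a compact subset of $(0,1)$ with probability tending to $1$, and the inequality $R_n(\widehat k)\ge R_n(k_0)$ forces $\widehat\theta_k$ into any neighbourhood of $\theta_0$.

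First I would unfold $R_n(k)$ block by block. Because $g_1$ and $g_2$ have the same functional form, $Z_{H_1,k}$ separates into two profile empirical log-likelihood terms $l_E$, one for $\{1,\dots,k\}$ fitted by a single AR($p$) coefficient vector and one for $\{k+1,\dots,n\}$ fitted by another, whereas $Z_{H_0,k}$ is the same quantity with the two blocks forced to share $\Phi$. Using $E\|g(X,\beta^*)\|^3<\infty$, $E\|\epsilon\|^4<\infty$ and positive-definiteness of $E(\epsilon\epsilon')$, the dual (Lagrange-multiplier) representation displayed in Section~3 bounds the multiplier $\lambda$ uniformly over $\beta$ near the relevant minimizers and over the block endpoints; with the ergodic theorem on each homogeneous stretch this shows that a block whose observations are driven by $\Phi$ on a fraction $\pi_1$ of its length and by $\Phi^*$ on a fraction $\pi_2=1-\pi_1$ has normalized profile empirical log-likelihood converging in probability, uniformly in $\pi_1$, to $-\inf_\beta\sup_\lambda\{\pi_1 E_1\log(1+\lambda'g(X,\beta))+\pi_2 E_2\log(1+\lambda'g(X,\beta))\}$, where $E_1,E_2$ are the stationary expectations under the two regimes and the transient near $k_0$ contributes only $O(p)$ terms. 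Assembling the three blocks produces $\rho(\theta)$; at $\theta=\theta_0$ both within-block mixtures vanish and $\rho(\theta_0)=2\inf_\beta\sup_\lambda\{\theta_0 E_1\log(1+\lambda'g(X,\beta))+(1-\theta_0)E_2\log(1+\lambda'g(X,\beta))\}$, which is positive and reproduces the rate-$n$ lower bound already used in the proof of Theorem~2.

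The decisive step is the strict inequality $\rho(\theta)<\rho(\theta_0)$ together with a uniform gap away from $\theta_0$. A short manipulation of the three infima, using $\sup_\lambda(af(\lambda)+bh(\lambda))\le a\sup_\lambda f(\lambda)+b\sup_\lambda h(\lambda)$, yields $\rho(\theta_0)-\rho(\theta)\ge 2M(\theta)$, where $M(\theta)=\inf_\beta\sup_\lambda E_{\mu_\theta}\log(1+\lambda'g(X,\beta))$ is the empirical-likelihood misfit of a single AR($p$) coefficient vector against the proportion-weighted mixture $\mu_\theta$ of the two regime laws that the ``wrong'' block must absorb when the split is at $\lfloor n\theta\rfloor$ rather than at $k_0$. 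For $\theta\ne\theta_0$ this mixture genuinely combines the two regimes, and no single coefficient vector can satisfy all estimating equations in $g$ at once: the first $p+1$ pin down the Yule--Walker fit to $\mu_\theta$, after which $\epsilon^2-\sigma^2=0$ fails because the mixture's residual variance exceeds $\sigma^2$. The hypothesis $\infty>\inf_{\delta\ne0}\sup_\lambda E\log[1+\lambda'x(x'\delta+e)]\ge c_0>0$ is precisely what makes this lower bound uniform in the degree of mixing, so that $\sup_{|\theta-\theta_0|\ge\varepsilon}\rho(\theta)\le\rho(\theta_0)-\Delta(\varepsilon)$ for some $\Delta(\varepsilon)>0$ (continuity of $\rho$, together with the fact that $M(\theta)$ is a minimum over an effectively compact set of $\beta$'s, makes the supremum attained). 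Fixing $\varepsilon>0$ and $\eta<\Delta(\varepsilon)/3$, on the event $\{n^{-1}R_n(k_0)>\rho(\theta_0)-\eta\}\cap\{\sup_{|\theta-\theta_0|\ge\varepsilon}n^{-1}R_n(\lfloor n\theta\rfloor)<\rho(\theta_0)-\Delta(\varepsilon)+\eta\}$, which has probability tending to $1$ by the convergence established above, $R_n(\widehat k)\ge R_n(k_0)$ forces $|\widehat\theta_k-\theta_0|<\varepsilon$; since $\varepsilon$ was arbitrary, $\widehat\theta_k\to\theta_0$ in probability.

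The main obstacle, I expect, is the uniform-in-split-point control of the empirical-likelihood dual multipliers needed so that $n^{-1}R_n(\lfloor n\theta\rfloor)\to\rho(\theta)$ holds up to the edges of the admissible range, and the promotion of the separation condition from a pointwise positive lower bound to the uniform gap $\Delta(\varepsilon)$, which also requires excluding flat stretches of $\rho$ near $\theta_0$; the segment-wise law of large numbers and the arg-max argument itself are then routine.
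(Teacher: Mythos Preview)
Your proposal is correct in outline but follows a genuinely different route from the paper. The paper does not construct a full limiting contrast $\rho(\theta)$ or invoke uniform convergence of $n^{-1}R_n(\lfloor n\theta\rfloor)$; instead it works directly with the \emph{difference} $-2\log\Lambda_{k_0}-(-2\log\Lambda_k)$ for a fixed $k$ with $|k_0-k|/n\ge\eta$, and shows this difference exceeds $cn$ with probability tending to~1. The key simplification the paper exploits, and you do not, is that $Z_{H_1,k_0}=O_p(1)$ under $H_1$ (both blocks are correctly specified at the true split), so one only has to lower-bound $\tfrac{1}{2n}(Z_{H_0,k_0}-Z_{H_0,k}+Z_{H_1,k})$. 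After expanding each sup in $\lambda$, cancelling the first-block terms, and passing to expectations via the LLN, the paper reduces to a case split on the maximizer $\lambda_2^*$ and appeals directly to the separation condition to get the constant $c_0>0$. This sidesteps the two obstacles you flag: there is no need to control the dual multipliers uniformly over \emph{all} admissible $\theta$, and the uniform gap $\Delta(\varepsilon)$ is obtained for free from the hypothesis rather than via continuity of a limit function.

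Your arg-max/M-estimation framework is more systematic and would port more readily to other models, but it requires substantially more machinery here---in particular the uniform-in-$\theta$ convergence and the identification of $\rho$ at misspecified splits, where \emph{both} $Z_{H_0,k}$ and $Z_{H_1,k}$ involve blocks straddling $k_0$. A minor point: your argument that the $\epsilon^2-\sigma^2$ component forces $M(\theta)>0$ is not quite right, since $\sigma^2$ is itself a free parameter in $\beta$; the positivity has to come from the $X_{i-r}\epsilon_i$ components, which is exactly what the separation hypothesis $\inf_{\delta\ne0}\sup_\lambda E\log[1+\lambda'x(x'\delta+e)]\ge c_0$ encodes.
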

Proofs are given in the Appendix.

\section{Simulation Study}
A Monte Carlo simulation has been conducted to illustrate the performance of the proposed method. Consider the following AR(1) model with mean $0$:
$$X_t =
\begin{cases}
0.1 X_{t-1} + \epsilon_t; 1 \leq t \leq k \\
0.5 X_{t-1} + \epsilon_t; k+1 \leq t \leq n,
\end{cases}$$
where $\epsilon_t$ is the white noise with mean zero and variance $\sigma^2$. Four different distributions are considered for $\epsilon_t$: (i) $N(0,1)$, (ii) $\exp(1)-1$, (iii) $\frac{1}{2 \sqrt{2}}(\chi_4^2 -4)$, and (iv) $\frac{1}{\sqrt{2}}t_4$. The power of the proposed test in detecting changes in parameters of the AR(1) model has been calculated for two different sample sizes: n=100, 150 and 250. Different change locations have been considered under each sample size. Additional simulations have been carried out to compute the empirical critical values under different significance levels which are turned out to be close to the theoretical critical values for the corresponding significance level. Hence, we use the theoretical critical value 2.9702 with~$\alpha=0.05$~for power calculations with 1000 simulations. The results are listed in Table 1. It can be seen that the power of the hypothesis test of AR(1) model increases with the sample size. The power values under a given change location are approximately similar for the four different error distributions. This maybe due to the fact that the three distributions $\exp(1)-1, \frac{1}{2 \sqrt{2}}(\chi_4^2 -4) \text{ and } \frac{1}{\sqrt{2}}t_4$ are standardized. When the change location is farther away from the starting location, then the power tends to decrease. Intuitively, this maybe due to the dependency existing in the data set.

\begin{table}[H]
\caption{\textit{Power of the hypothesis test for \textsc{AR(1)} model}} 
\centering
\small
\begin{tabular}{c c c c c}
\hline\hline 
k & $N(0,1)$ & $\exp(1)-1$ & $\frac{1}{2 \sqrt{2}}(\chi_4^2 -4)$ & $\frac{1}{\sqrt{2}}t_4$ \\ [0.5ex]
\hline
\multicolumn{5}{c}{n = 100} \\ [0.5ex]
20    & 0.802 & 0.816 & 0.815 & 0.808 \\
30    & 0.765 & 0.775 & 0.774 & 0.779 \\
40    & 0.723 & 0.732 & 0.754 & 0.747 \\
50    & 0.656 & 0.669 & 0.627 & 0.674 \\
80    & 0.296 & 0.292 & 0.283 & 0.331 \\ [1ex]

\multicolumn{5}{c}{n = 150} \\ [0.5ex]
30    & 0.929 & 0.929 & 0.924 & 0.903 \\
45    & 0.913 & 0.901 & 0.893 & 0.872 \\
60    & 0.862 & 0.844 & 0.853 & 0.845 \\
75    & 0.806 & 0.799 & 0.813 & 0.804 \\
120    & 0.385 & 0.397 & 0.426 & 0.450 \\ [1ex]

\multicolumn{5}{c}{n = 250} \\ [0.5ex]
50    & 0.993 & 0.988 & 0.990 & 0.976 \\
80    & 0.983 & 0.981 & 0.976 & 0.965 \\
100   & 0.966 & 0.966 & 0.976 & 0.948 \\
125   & 0.941 & 0.927 & 0.928 & 0.926 \\
200   & 0.621 & 0.603 & 0.626 & 0.622 \\
\end{tabular}
\label{table:Table1}
\end{table}

\section{Application}
In this section, we study the data which consists of monthly average soybean prices achieved by farmers in Illinois from January 1960 to November 2008 with the sample size 587. The prices are given in dollars per bushel. This data was analyzed by Balcombe et al. (2007) who considered the threshold AR(1) models  for modeling the prices of agricultural products. Berkes et al. (2011) studied this data set by proposing the likelihood ratio test to detect the structural change of an AR model to threshold AR model.  We apply the proposed EL method for AR(1) changepoint model to detect the structural change in the same data set. Figure 1 shows the time series plot for the given data.
\begin{figure}[h]
\centering
\caption{Time Series for the Monthly average soybean prices}
\includegraphics[scale=0.5]{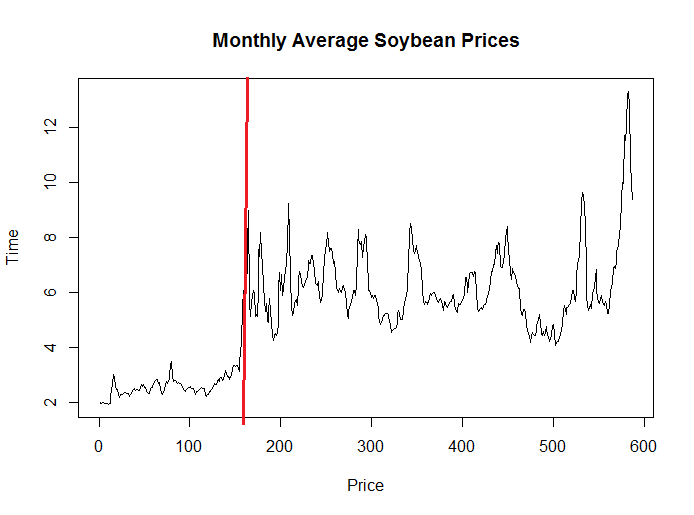}
\end{figure}
In order to test if there are significant changes, we use the $Z_n^*$ from (6). The value of $Z_n^*$ is 16.07426. Using the critical values derived under the Theorem are given in Table 2, we have sufficient evidence to reject the null hypothesis that there is no change.
\begin{table}[h]
\centering
\caption{Theoretical Critical values}
\begin{tabular}{c|c|c}
\hline
\hline
$\alpha =0.01$ & $\alpha=0.05$ & $\alpha=0.10$ \\ \hline
4.600149     & 2.970195    & 2.250367    \\ \hline
\end{tabular}
\end{table}

\section{Discussion}
In this paper, we discuss developing an EL-based detecting procedure for structural changes in time series data, i.e. testing null hypothesis of no change versus alternative hypothesis of one change. A test statistic is derived for a fixed change location and the max-type of test statistic over all possible change locations is considered. The asymptotic null distribution of the test statistic has been established as extreme value distribution. Simulations to compute the power in AR(1) model have been carried out with different sample sizes and different error distributions in order to illustrate the performance of the proposed test statistic. The results indicate that the proposed method is efficiently identify the changes in a given time series data set. We should point out that, due to the slow convergence of the proposed test statistic in Theorem 1, the moderate or the large sample size is recommended to achieve the good approximation (See Cs\"{o}rg\H{o} and Horv\'{a}th, 1997). If the sample size is small, the bootstrap is suggested to obtain the approximated p-values in practice. 

As for future work, we plan to extend the proposed method to other stationary time series models such as MA, ARMA, GARCH models along with corresponding analytic results and simulations. Comparisons to other existing methods will be done. Further, sequential change point detection based on EL method is to be studied where the sample size is a random variable and the null hypothesis of sequential structural stability will be rejected as soon as a change is detected. Hence, the objective in sequential change point detection is to detect such a change with a minimum number of false alarms. A nonparametric testing procedure based on EL method will be proposed and related asymptotic results will be studied.

\section*{Appendix}
In order to prove Theorem 1, we need following Lemmas.
\begin{lemma}
Assume that for $i=1,2, E[g_i(X, \beta_0)g_i^\prime(X, \beta_0)]$ is positive definite, $\frac{\partial g_i(X, \beta)}{\partial \beta}$ is continuous in a neighborhood of the true value $\beta_0$, $E \Big[ \Big(\frac{\partial g_i(X, \beta_0)}{\partial \beta^\prime}\Big) \Big(\frac{\partial g_i(X, \beta_0)}{\partial \beta^\prime}\Big)^\prime \Big]$, $E \Big[\frac{\partial^2 g_i(X, \beta)}{\partial \beta \partial \beta^\prime}\Big]$, $E\Big[ \Big(\frac{\partial g_i(X, \beta)}{\partial \beta^\prime}\Big)^\prime g_i(X, \beta) \Big]$ and $E\parallel g_i(X, \beta) \parallel ^3$ are all bounded in the neighborhood of the true value $\beta_0$. Then, as $n \rightarrow \infty$, $\exists \tilde{\beta}$, $\tilde{\lambda}=\lambda(\tilde{\beta})$ with probability 1 satisfying,
$$Q_{1n}(\tilde{\beta}, \tilde{\lambda}) = 0, \quad Q_{2n}(\tilde{\beta}, \tilde{\lambda}) = 0 \text{ and } \parallel \tilde{\beta}-\beta_0 \parallel = O_p(m^{-\frac{1}{2}}),$$
where
\begin{align*}
Q_{1n}(\beta, \lambda)&=\sum_l \frac{1}{1 + \lambda^\prime(\beta) \theta_l^{-1}g(x_l, \beta)}\theta_l^{-1}g(x_l, \beta),\\
Q_{2n}(\beta,\lambda)&=\sum_l \frac{1}{1 + \lambda^\prime(\beta) \theta_l^{-1}g(x_l, \beta)} \theta_l^{-1} \bigg(\frac{\partial g(x_l, \beta)}{\partial \beta}\bigg)^\prime \lambda(\beta).
\end{align*}
\end{lemma}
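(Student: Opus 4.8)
The plan is to follow the strategy of Qin and Lawless (1994), adapted to the two-block weighting $\theta_l^{-1}$ produced by the change-point construction. The argument splits into two nested stages: (i) for each fixed $\beta$ in a shrinking neighbourhood of $\beta_0$, solve the inner equation $Q_{1n}(\beta,\lambda)=0$ for a unique Lagrange multiplier $\lambda=\lambda(\beta)$ and bound its size; (ii) substitute $\lambda(\beta)$ back in and solve the outer equation $Q_{2n}(\beta,\lambda(\beta))=0$ for $\tilde\beta$ by an Aitchison--Silvey / Brouwer fixed-point argument on a small ball, which simultaneously delivers the rate $\|\tilde\beta-\beta_0\|=O_p(m^{-1/2})$, where $m$ denotes the order of the (smaller) block size, assumed to tend to infinity.

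First I would lay down the moment scaffolding. Writing $g(x_l,\beta_0)$ as the $\theta_l^{-1}$-weighted combination of the two block contributions, the coordinates $X_{l-r}\epsilon_l$ form martingale differences under $H_0$, so a martingale central limit theorem (using $E\|g(X,\beta^*)\|^3<\infty$ and $E\|\epsilon\|^4<\infty$) gives $\tfrac{1}{m}\sum_l \theta_l^{-1} g(x_l,\beta_0) = O_p(m^{-1/2})$, while the law of large numbers gives $\tfrac{1}{m}\sum_l \theta_l^{-1} g g' \to S$ and $\tfrac{1}{m}\sum_l \theta_l^{-1}(\partial g/\partial\beta')' \to G$, with $S$ positive definite by hypothesis and $G$ of full column rank by positive-definiteness of $E(\epsilon\epsilon')$ together with stationarity of the AR($p$) process. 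I would also record the maximal bound $\max_l\|g(x_l,\beta_0)\| = o_p(m^{1/2})$, which is what keeps the denominators $1+\lambda'\theta_l^{-1}g(x_l,\beta)$ uniformly bounded away from zero as soon as $\|\lambda\|=O_p(m^{-1/2})$.

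For the inner step, fix $\beta$ with $\|\beta-\beta_0\|\le m^{-1/3}$. Expanding $Q_{1n}(\beta,\lambda)$ to first order in $\lambda$ at $\lambda=0$ and inverting, exactly as in Owen (1990), yields a unique $\lambda(\beta)$ with $\|\lambda(\beta)\|=O_p(m^{-1/2})$ uniformly over the ball, continuously differentiable in $\beta$ there. Substituting into $Q_{2n}$ and Taylor-expanding in $(\beta-\beta_0,\lambda)$ about $(\beta_0,0)$ gives $Q_{2n}(\beta,\lambda(\beta)) = -G'S^{-1}\big(\tfrac{1}{m}\sum_l\theta_l^{-1}g(x_l,\beta_0)\big) - (G'S^{-1}G)(\beta-\beta_0) + o_p(m^{-1/2}) + o_p(\|\beta-\beta_0\|)$ on the ball. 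Since the first term is $O_p(m^{-1/2})$ and $G'S^{-1}G$ is positive definite, on the sphere $\|\beta-\beta_0\|=Cm^{-1/2}$ the quadratic-form term dominates and fixes the sign of $(\beta-\beta_0)'Q_{2n}(\beta,\lambda(\beta))$ for $C$ large; continuity of $\beta\mapsto Q_{2n}(\beta,\lambda(\beta))$ plus this sign condition on the boundary then force a zero $\tilde\beta$ in the interior (Brouwer), and setting $\tilde\lambda=\lambda(\tilde\beta)$ completes the existence claim with the stated rate.

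The main obstacle I anticipate is not the fixed-point machinery, which is routine, but the bookkeeping imposed by the two-block weights and the serial dependence of the AR($p$) process: one must verify that the LLN/CLT statements above hold with the weights $\theta_l^{-1}$, and that the third-order remainders in the Taylor expansions are genuinely negligible --- this is precisely where the boundedness of $E\|g(X,\beta)\|^3$, $E[\partial^2 g/\partial\beta\partial\beta']$, $E[(\partial g/\partial\beta')'g]$ in a neighbourhood of $\beta_0$ is used. A secondary point is the ``with probability $1$'' wording: the natural route is to prove existence on events of probability tending to one, as in Qin--Lawless, which is exactly what the $O_p$ conclusion requires; an honestly almost-sure statement would instead invoke the strong law for the martingale-difference averages.
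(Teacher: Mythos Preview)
Your overall architecture --- adapt Qin and Lawless (1994) to the two-block weighted estimating functions, first pin down $\lambda(\beta)$ from $Q_{1n}=0$ uniformly over a shrinking ball, then locate $\tilde\beta$ by a topological argument --- is the same as the paper's. The one substantive difference is in the outer step: the paper works with the profile objective $l_E(\beta)=\sum_l\log(1+\lambda(\beta)'\theta_l^{-1}g(x_l,\beta))$ rather than with the score $Q_{2n}$ directly. After deriving the quadratic approximation $l_E(\beta)=\tfrac{n}{2}\bar g(\beta)'V_n(\beta)^{-1}\bar g(\beta)+o_p(1)$, it compares $l_E(\beta_0)=O_p(1)$ against $l_E(\beta)\to\infty$ on the sphere $\|\beta-\beta_0\|=a_m m^{-1/2}$ (any $a_m\to\infty$, $a_m m^{-1/2}\to 0$), concludes the minimum lies strictly inside, and reads off $Q_{2n}(\tilde\beta,\tilde\lambda)=0$ from the first-order condition. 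Your Brouwer/sign argument on $(\beta-\beta_0)'Q_{2n}(\beta,\lambda(\beta))$ is the gradient version of the same idea and is equally valid; the objective-function route has the mild advantage that it automatically identifies $\tilde\beta$ as a \emph{minimizer} of $l_E$, which is what the downstream lemmas use.

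Two bookkeeping points to correct before executing your plan. First, the second-moment matrix carries weight $\theta_l^{-2}$, not $\theta_l^{-1}$: the relevant object is $V_n(\beta)=\tfrac{1}{n}\sum_l\theta_l^{-2}g(x_l,\beta)g(x_l,\beta)'$, which the paper shows is of exact order $\epsilon_k^{-1}$ where $\epsilon_k=m/n$. Second, the natural normalisation throughout is $1/n$, not $1/m$; the two scales $n$ and $m$ interact through $\epsilon_k$, and collapsing everything to $m$ only works when $k/n$ is bounded away from $0$ and $1$. In the paper's setting $m$ can be as small as $2\lfloor n^{1/2}\rfloor$, so you need to track the $\epsilon_k$ factors explicitly --- e.g.\ $\bar g(\beta_0)=\tfrac{1}{n}\sum_l\theta_l^{-1}g(x_l,\beta_0)=O_p(m^{-1/2})$ and $\lambda(\beta)=\epsilon_k\,O_p(m^{-1/2})$ --- to get the remainder control right.
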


\begin{proof}
First we will show
\begin{align*}
\lambda(\beta) &= \epsilon_k O_p(m^{-\frac{1}{2}}) \\
               &= \Big[ \frac{1}{n} \sum_{l=1}^n \theta_l^{-2} g(x_l, \beta) g^\prime(x_l, \beta) \Big]^{-1} \Big[ \frac{1}{n} \sum_{l=1}^n  \theta_l^{-1} g(x_l, \beta)\Big] + \epsilon_k o_p(m^{-\frac{1}{2}}),
\end{align*}
where $\epsilon_k = \min \{\theta_k, 1- \theta_k\}$ and $m=n\epsilon_k = \min\{k, n-k\}$.\\
Let $\beta-\beta_0 = um^{-\frac{1}{2}}$ for $\beta \in \{\beta:\parallel\beta-\beta_0\parallel=m^{-\frac{1}{2}}\}$ where $\parallel u\parallel=1$. Let~$\lambda$~be the solution of the function~$f(\lambda)$~given by the first score function defined in Section 3.
\begin{equation}
f(\lambda) = \frac{1}{n} \sum_{l=1}^n \frac{\theta_l^{-1}}{1 + \lambda^\prime(\beta) \theta_l^{-1} g(x_l,\beta)} g(x_l,\beta) = 0.
\tag{A.1}\label{eq:A1}
\end{equation}
Let $\lambda=\rho u$ where $u=(\beta-\beta_0)m^\frac{1}{2}$ and $\parallel u\parallel=1$.
\begin{align*}
0&=\parallel f(\rho u)\parallel \\
&\geq |u^\prime f(\rho u)| \\
&= \frac{1}{n} \Big|u^\prime \Big(\sum_l \theta_l^{-1} g(x_l,\beta) - \rho \sum_l \frac{\theta_l^{-2} g(x_l, \beta)u^\prime g(x_l, \beta)}{1 + \rho u^\prime \theta_l^{-1} g(x_l, \beta)} \Big) \Big| \\
&\geq \frac{\rho}{n} u^\prime \sum_l \frac{\theta_l^{-2} g(x_l, \beta)u^\prime g(x_l, \beta)}{1 + \rho u^\prime \theta_l^{-1}} u - \frac{1}{n} \Big| \sum_{j=1}^p e_j \sum_l \theta_l^{-1}g(x_l, \beta)\Big| \tag{\text{ where $e_j$ is the unit vector in the $j^{th}$ coordinate direction.}}\\
&\geq \frac{\rho u^\prime Su}{1 + \rho \theta_l g^*} - O_p(m^{-\frac{1}{2}}),\tag{\text{where $g^*=\smash{\displaystyle\max_{l}}  g(x_l, \beta)$ and $S=\frac{1}{n}\sum_l \theta_l^{-2} g(x_l,\beta) g^\prime(x_l,\beta).$}}
\end{align*}
Since $u^\prime Su \geq \sigma_p + o_p(1)$, where $\sigma_p>0$ is the smallest eigen value of $\Sigma$, then
$$\frac{\rho}{1 + \rho \theta_l g^*} = O_p(m^{-\frac{1}{2}})$$
So, $\parallel \lambda \parallel = \rho =  O_p(m^{-\frac{1}{2}})$. \\
Let $\gamma_l = \lambda^\prime(\beta)\theta_l^{-1} g(x_l, \beta)$. Then, $\smash{\displaystyle\max_{l}} |\gamma_l| = O_p(m^{-\frac{1}{2}})o(m^{\frac{1}{2}}) = o_p(1)$. \\
Expanding \eqref{eq:A1},
\begin{align*}
0 = f(\lambda) &= \frac{1}{n} \sum_l \theta_l^{-1} g(x_l, \beta) \big[ 1 - \gamma_l + \frac{\gamma_l^2}{1+\gamma_l} \big] \\
&= \frac{1}{n} \sum_l \theta_l^{-1} g(x_l, \beta) - \frac{1}{n} \sum_l \theta_l^{-1} g(x_l, \beta)\cdot \gamma  + \frac{1}{n} \sum_l \theta_l^{-1} g(x_l, \beta) \frac{\gamma_l^2}{1+\gamma_l} \\
&= E(\theta_l^{-1} g(x_l, \beta)) - S \lambda + \frac{1}{n} \sum_l \theta_l^{-1} g(x_l, \beta) \frac{\gamma_l^2}{1+\gamma_l}. \tag{A.2}\label{eq:A2}
\end{align*}
The last equality is since $ \frac{1}{n} \sum_l \theta_l^{-1} g(x_l, \beta)\cdot \gamma = \frac{1}{n} \sum_l \theta_l^{-1} g(x_l, \beta) \theta_l^{-1} g^\prime(x_l, \beta)  \lambda= S \lambda.$\\
By substituting $\gamma_l$, we have the final term of \eqref{eq:A2};
\begin{align*}
\frac{1}{n} \sum_l \parallel \theta_l^{-1} g(x_l, \beta)\parallel^3 \parallel \lambda \parallel ^2 |1+\gamma_l|^{-1} = o_p(m^{\frac{1}{2}}) O_p(m^{-1}) o_p(1) = o_p(m^{-\frac{1}{2}}).
\end{align*}
Therefore,
\begin{align*}
0 &=  E(\theta_l^{-1} g(x_l, \beta)) - S \lambda + o_p(m^{-\frac{1}{2}})\\
&\Rightarrow \lambda = S^{-1} E(\theta_l^{-1} g(x_l, \beta)) + o_p(m^{-\frac{1}{2}})\\
&\Rightarrow \lambda = \Big[ \frac{1}{n} \sum_{l=1}^n \theta_l^{-2} g(x_l, \beta) g^\prime(x_l, \beta) \Big]^{-1} \Big[ \frac{1}{n} \sum_{l=1}^n  \theta_l^{-1} g(x_l, \beta)\Big] + o_p(m^{-\frac{1}{2}}).
\tag{A.3}\label{eq:A3}
\end{align*}
Now, denote $V_n(\beta)=\frac{1}{n} \sum_{l=1}^n \theta_l^{-2} g(x_l, \beta)g^\prime(x_l, \beta)$, $\bar{g}(\beta) = \frac{1}{n}\sum_{l=1}^n \theta_l^{-1} g(x_l, \beta)$, and $\varepsilon=\epsilon_k o_p(m^{-\frac{1}{2}})$. So \eqref{eq:A2} can be rewritten as,
$$\lambda(\beta) = V_n(\beta)^{-1} \bar{g}(\beta) + \varepsilon.$$
Since $\gamma_l = \lambda^\prime(\beta)\theta_l^{-1} g(x_l, \beta),$ so $\sum_{l=1}^n |r_l|^3 = o_p(1)$.\\
Let $a_m$ be any constant sequence such that $a_m \rightarrow \infty$, and $a_m m^{-\frac{1}{2}}\rightarrow 0$. Denote the ball $B(\beta_0, a_m) = \{\beta|\parallel \beta-\beta_0 \parallel \leq a_m m^{-\frac{1}{2}}\}$ and the surface of the ball $\partial B(\beta_0, a_m) = \{\beta|\parallel \beta-\beta_0 \parallel = \phi a_m m^{-\frac{1}{2}}, \parallel\phi \parallel=1\}$. For any $\beta\in\partial B(\beta_0, a_m)$, we have
\begin{align*}
V_n(\beta) &= \frac{1}{n} \sum_{l=1}^n \theta_l^{-2} g(x_l, \beta)g^\prime(x_l, \beta)\\
&= \frac{n}{k} \frac{1}{k}\sum_{l=1}^k g_1(x_l, \beta_0)g_1^\prime(x_l, \beta_0) + \frac{n}{n-k} \frac{1}{n-k} \sum_{l=k+1}^n g_2(x_l, \beta_0)g_2^\prime(x_l, \beta_0) + o_p(\epsilon_k^{-1})\\
&= \frac{n}{k} E g_1(x_l, \beta_0)g_1^\prime(x_l, \beta_0) + \frac{n}{n-k} E g_2(x_l, \beta_0)g_2^\prime(x_l, \beta_0) + o_p(\epsilon_k^{-1})\\
&\leq \epsilon_k^{-1}\big[ E g_1(x_l, \beta_0)g_1^\prime(x_l, \beta_0) + E g_2(x_l, \beta_0)g_2^\prime(x_l, \beta_0)\big] + o_p(\epsilon_k^{-1}),
\end{align*}
and
\begin{align*}
\bar{g}(\beta_0) &= \frac{1}{n} \sum_{l=1}^n \theta_l^{-1} g(x_l, \beta) \\
&= \frac{1}{k} \sum_{l=1}^k g_1(x_l, \beta_0) + \frac{1}{n-k} \sum_{l=k+1}^n g_2(x_l, \beta_0)\\
&= \frac{1}{k} o_p(k^{\frac{1}{2}}) + \frac{1}{n-k} o_p((n-k)^{\frac{1}{2}})\\
&= o_p(k^{-\frac{1}{2}}) + o_p((n-k)^{-\frac{1}{2}})\\ 
&= o_p(m^{-\frac{1}{2}}).
\end{align*}
\noindent By the Taylor expansion, for any $\beta\in\partial B(\beta_0, a_m)$, we have
\begin{align*}
l_E(\beta) &= \sum_l \lambda^\prime(\beta)\theta_l^{-1} g(x_l, \beta) - \frac{1}{2} \sum_l \Big[ \lambda^\prime(\beta)\theta_l^{-1} g(x_l, \beta) \Big] ^2 + o_p(1).
\tag{A.4}\label{eq:A4}
\end{align*}
The first term of \eqref{eq:A4} is;
\begin{align*}
\sum_l \lambda^\prime(\beta)\theta_l^{-1} g(x_l, \beta) &= \Big[ \frac{1}{n} \sum_{l=1}^n  \theta_l^{-1} g(x_l, \beta)\Big]^\prime \Big[ \frac{1}{n} \sum_{l=1}^n \theta_l^{-2} g(x_l, \beta) g^\prime(x_l, \beta) \Big]^{-1} \Big[ \frac{1}{n} \sum_{l=1}^n  \theta_l^{-1} g(x_l, \beta)\Big]\\
 &+ o_p(1). \tag{A.4.1}\label{eq:A4_1}
\end{align*}
The second term of \eqref{eq:A4} is:
\begin{align*}
 &\frac{1}{2} \sum_l \Big[ \lambda^\prime(\beta)\theta_l^{-1} g(x_l, \beta) \Big] ^2\\
  &=  \frac{1}{2} \sum_l \lambda^\prime(\beta)\theta_l^{-2} g(x_l, \beta)g^\prime(x_l, \beta)\\
 &= \frac{n}{2}\Big[ \frac{1}{n} \sum_{l=1}^n  \theta_l^{-1} g(x_l, \beta)\Big]^\prime \Big[ \frac{1}{n} \sum_{l=1}^n \theta_l^{-2} g(x_l, \beta) g^\prime(x_l, \beta) \Big]^{-1}\\
 &\Big[ \frac{1}{n} \sum_{l=1}^n \theta_l^{-2} g(x_l, \beta) g^\prime(x_l, \beta) \Big] \Big[ \frac{1}{n} \sum_{l=1}^n \theta_l^{-2} g(x_l, \beta) g^\prime(x_l, \beta) \Big]^{-1}\Big[\frac{1}{n}\sum_{l=1}^n  \theta_l^{-1} g(x_l, \beta)\Big]  + o_p(1)\\
 &= \frac{n}{2}\Big[ \frac{1}{n} \sum_{l=1}^n  \theta_l^{-1} g(x_l, \beta)\Big]^\prime \Big[ \frac{1}{n} \sum_{l=1}^n \theta_l^{-2} g(x_l, \beta) g^\prime(x_l, \beta) \Big]^{-1} \Big[\frac{1}{n}\sum_{l=1}^n  \theta_l^{-1} g(x_l, \beta)\Big]+ o_p(1). \tag{A.4.2}\label{eq:A4_2}
\end{align*}
Now,
\begin{align*}
&\eqref{eq:A4_1} - \eqref{eq:A4_2}\\
&=\frac{n}{2} \Big( \frac{1}{n}\sum_l \theta_l^{-1} g(x_l, \beta) \Big) ^\prime \Big( \frac{1}{n}\sum_l \theta_l^{-2} g(x_l, \beta)g^\prime(x_l, \beta) \Big) ^{-1} \Big( \frac{1}{n}\sum_l \theta_l^{-1} g(x_l, \beta) \Big) + o_p(1).
\end{align*}
So we can rewrite \eqref{eq:A4} as,
\begin{align*}
l_E(\beta)&= \frac{n}{2} \Big( \frac{1}{n}\sum_l \theta_l^{-1} g(x_l, \beta) \Big) ^\prime \Big( \frac{1}{n}\sum_l \theta_l^{-2} g(x_l, \beta)g^\prime(x_l, \beta) \Big) ^{-1} \Big( \frac{1}{n}\sum_l \theta_l^{-1} g(x_l, \beta) \Big) + o_p(1)\\
&= \frac{n}{2} \bar{g}^\prime (\beta) (V_n(\beta))^{-1}\bar{g}(\beta) + o_p(1) \\
&= \frac{n}{2} \Bigg\{\bar{g}(\beta_0) + \frac{1}{n} \sum_l \theta_l^{-1} \frac{\partial g(x_l, \beta_0)}{\partial \beta^\prime}\phi a_m m^{-\frac{1}{2}} + O\Big[(a_m m^{-\frac{1}{2}})^2\Big] \Bigg\}^\prime \times \Big(V_n(\beta)\Big)^{-1} \times\\
& \Bigg\{\bar{g}(\beta_0) + \frac{1}{n} \sum_l \theta_l^{-1} \frac{\partial g(x_l, \beta_0)}{\partial \beta^\prime}\phi a_m m^{-\frac{1}{2}} + O\Big[(a_m m^{-\frac{1}{2}})^2\Big] \Bigg\} + o_p(1) \tag{By Taylor expansion of each term.}\\
& \geq \frac{n \epsilon_k}{2} \Bigg\{\bar{g}(\beta_0) + \frac{1}{n} \sum_l \theta_l^{-1} \frac{\partial g(x_l, \beta_0)}{\partial \beta^\prime}\phi a_m m^{-\frac{1}{2}} + O\Big[(a_m m^{-\frac{1}{2}})^2\Big] \Bigg\}^\prime \times \Big(V_n(\beta)\Big)^{-1} \times\\
& \Bigg\{\bar{g}(\beta_0) + \frac{1}{n} \sum_l \theta_l^{-1} \frac{\partial g(x_l, \beta_0)}{\partial \beta^\prime}\phi a_m m^{-\frac{1}{2}} + O\Big[(a_m m^{-\frac{1}{2}})^2\Big] \Bigg\} + o_p(1).
%
\end{align*}
As $n\rightarrow\infty$, $l_E(\beta) \rightarrow \infty$.

\noindent Similarly,
$$l_E(\beta_0) = \frac{n}{2} \bar{g}^\prime(\beta_0)V_n(\beta_0)^{-1} \bar{g}(\beta_0) + o_p(1),$$
$$V_n(\beta_0) = \frac{n}{k} E g_1(x_l, \beta_0)g_1^\prime(x_l, \beta_0) + \frac{n}{n-k} E g_2(x_l, \beta_0)g_2^\prime(x_l, \beta_0) + o_p(\epsilon_k^{-1}).$$
Thus, $l_E(\beta_0) = O_p(1)$ implies that for any $\beta\in\partial B(\beta_0, a_m)$, $l_E(\beta)$ can not arrive its minimum value with the probability approaching to 1. Since $l_E(\beta)$ is a continuous function about $\beta$, as $\beta\in B(\beta_0, a_m)$, $l_E(\beta)$ has a minimum value in the interior of this ball satisfying,
\begin{align*}
0 = \frac{\partial l_E(\beta)}{\partial \beta}\Bigg|_{\beta=\tilde{\beta}}
&= \sum_l \frac{\bigg(\frac{\partial \lambda^\prime(\beta)}{\partial \beta}\bigg)\theta_l^{-1}g(x_l, \beta) + \theta_l^{-1}\bigg(\frac{\partial g(x_l, \beta)}{\partial \beta}\bigg)^\prime \lambda(\beta)}{1 + \lambda^\prime(\beta) \theta_l^{-1}g(x_l, \beta)} \Bigg|_{\beta=\tilde{\beta}}\\
&= \frac{\partial \lambda^\prime(\beta)}{\partial \beta} \sum_l \frac{\theta_l^{-1}g(x_l, \beta)}{1 + \lambda^\prime(\beta) \theta_l^{-1}g(x_l, \beta)} \Bigg|_{\beta=\tilde{\beta}} + \sum_l \frac{\theta_l^{-1}\bigg(\frac{\partial g(x_l, \beta)}{\partial \beta}\bigg)^\prime \lambda(\beta)}{1 + \lambda^\prime(\beta)\theta_l^{-1} g(x_l, \beta)}\\
&= \sum_l \frac{\theta_l^{-1}\bigg(\frac{\partial g(x_l, \beta)}{\partial \beta}\bigg)^\prime \lambda(\beta)}{1 + \lambda^\prime(\beta) \theta_l^{-1}g(x_l, \beta)} \tag{Since $\sum_l \frac{\theta_l^{-1}g(x_l, \beta)}{1 + \lambda^\prime(\beta) \theta_l^{-1}g(x_l, \beta)} \Big|_{\beta=\tilde{\beta}}= Q_{1n}(\tilde{\beta}, \tilde{\lambda}) = 0 $}\\
&= Q_{2n}(\tilde{\beta}, \tilde{\lambda}).
\end{align*}

\noindent Hence, $Q_{1n}(\tilde{\beta}, \tilde{\lambda}) = 0$ and $ Q_{2n}(\tilde{\beta}, \tilde{\lambda}) = 0$. That is, $\parallel \tilde{\beta}-\beta_0 \parallel = O_p(a_m m^{-\frac{1}{2}}).$ But $a_m$ is arbitrary, hence $\parallel \tilde{\beta}-\beta_0 \parallel = O_p(m^{-\frac{1}{2}})$.
\end{proof}

\begin{remark}
From Lemma 3 of Airchison and Silvey (1957), the partitioned matrix
\[
 \begin{pmatrix}
  A & B \\
  B^\prime & 0
 \end{pmatrix}
\]
is non-singular. Hence,
\[
 \begin{pmatrix}
  A & B \\
  B^\prime & 0
 \end{pmatrix}^{-1} =
 \begin{pmatrix}
  P & Q \\
  Q^\prime & R
 \end{pmatrix}
\]
where
$$P=A^{-1} -A^{-1}B(B^\prime A^{-1}  B)^{-1} B^\prime A^{-1}, \qquad Q= A^{-1}B(B^\prime A^{-1}  B)^{-1},$$
$$Q^\prime = (B^\prime A^{-1}  B)^{-1}B^\prime A^{-1}, \qquad R=-(B^\prime A^{-1}  B)^{-1}$$
\end{remark}

\begin{remark}
\noindent
\[\text{If }
 \begin{pmatrix}
  A & B \\
  C & D
 \end{pmatrix}
\text{is a $n\times n$ symmetric positive definite matrix, and the partitioned matrices
$A\in \mathbb{R}^{m\times m},$}\]
\text{ $B\in \mathbb{R}^{m\times n-m}$, }
and
$D\in \mathbb{R}^{(n-m)\times (n-m)}$, then
\begin{enumerate}
\item the matrix $(D-CA^{-1}B)$ is symmetric and positive definite,
\item \[
 \begin{pmatrix}
  A & B \\
  C & D
 \end{pmatrix}^{-1} \geq
 \begin{pmatrix}
  A^{-1} & 0 \\
  0 & 0
 \end{pmatrix}.
\]
\end{enumerate}
\end{remark}

\begin{remark}
$\beta^\prime = ((\beta^\prime, \mu^\prime), \delta^\prime)$.
$$\frac{\partial Q_{1n}(\beta, 0)}{\partial \lambda^\prime} = -\frac{1}{n}\sum_l \theta_l^{-2} g(x_l, \beta)g^\prime(x_l, \beta), \quad \frac{\partial Q_{1n}(\beta, 0)}{\partial \beta^\prime} = \frac{1}{n}\sum_l \theta_l^{-1} \frac{\partial g(x_l, \beta)}{\partial \beta^\prime}$$
$$\frac{\partial Q_{2n}(\beta, 0)}{\partial \lambda^\prime} = \frac{1}{n}\sum_l \bigg(\theta_l^{-1} \frac{\partial g(x_l, \beta)}{\partial \beta^\prime}\bigg)^\prime, \quad \frac{\partial Q_{2n}(\beta, 0)}{\partial \beta^\prime} = 0$$
\item \[
 \begin{pmatrix}
  \frac{\partial Q_{1n}}{\partial \lambda^\prime} & \frac{\partial Q_{1n}}{\partial \beta^\prime} \\
  \frac{\partial Q_{2n}}{\partial \lambda^\prime} & 0
 \end{pmatrix} \longrightarrow
 \begin{pmatrix}
  S_{11} & S_{12} \\
  S_{21} & 0
 \end{pmatrix} = S(\beta) \equiv S
\]
where \\
$S_{11}(\beta) = -\theta_l^{-1} E\big[g_1(x_l, \beta)g_1^\prime(x_l, \beta) \big] - (1-\theta_l)^{-1} E\big[g_2(x_l, \beta)g_2^\prime(x_l, \beta)\big],$\\
$S_{12}(\beta) = \theta^{-1}_lE\Big[ \frac{\partial g_1(x_l, \beta_0)}{\partial \beta^\prime} \Big] + (1-\theta_l)^{-1}E\Big[ \frac{\partial g_2(x_l, \beta_0)}{\partial \beta^\prime} \Big],$\\
$S_{21}(\beta) = S_{12}^\prime(\beta),$\\
$S_{12,i}(\beta) = \theta^{-1}_lE\Big[ \frac{\partial g_1(x_l, \beta_0)}{\partial \beta_i^\prime} \Big] + (1-\theta_l)^{-1}E\Big[ \frac{\partial g_2(x_l, \beta_0)}{\partial \beta_i^\prime} \Big]$, $i=1, 2$.\\
\end{remark}
\noindent By, Remark 1,
\[
S^{-1} =
\begin{pmatrix}
  S_{11} & S_{12} \\
  S_{21} & 0
 \end{pmatrix} ^{-1} =
 \begin{pmatrix}
  P & Q \\
  Q^\prime & R
 \end{pmatrix}
\]
where
$$P = S_{11}^{-1} - S_{11}^{-1} S_{12}(S_{21}S_{11}^{-1}S_{12})^{-1}S_{21}S_{11}^{-1} = S_{11}^{-1} + S_{11}^{-1} S_{12} \Sigma S_{21}S_{11}^{-1}; \quad \Sigma = (S_{21}(-S_{11}^{-1})S_{12})^{-1},$$
$$Q = - S_{11}^{-1} S_{12}(S_{21}S_{11}^{-1}S_{12})^{-1} = - S_{11}^{-1} S_{12} \Sigma, \qquad Q^\prime = -\Sigma S_{21}S_{11}^{-1}, \quad R = -(S_{21}S_{11}^{-1}S_{12})^{-1} = \Sigma.$$

\begin{lemma}
Under the conditions in Lemma 1 and $H_0$, as $n\rightarrow \infty$ we have
$$\sqrt{n}\Sigma^{-\frac{1}{2}}(\tilde{\beta}-\beta_0) \rightarrow N(0, I_{2p+q}),$$
where $\Sigma = [S_{21}(-S_{11})^{-1}S_{12}]^{-1}$.
\end{lemma}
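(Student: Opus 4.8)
The plan is to linearize the two estimating equations supplied by Lemma~1 around the point $(\beta_0,0)$ and to read off the limiting law of $\tilde\beta-\beta_0$ from the leading stochastic term, in the spirit of Qin and Lawless (1994) but carrying along the extra weights $\theta_m^{-1}$ induced by the split at $k$.

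First I would pin down the order of $\tilde\lambda$: repeating the first part of the proof of Lemma~1 (the steps through \eqref{eq:A3}) with $\beta=\tilde\beta$ and $\|\tilde\beta-\beta_0\|=O_p(m^{-1/2})$ gives $\tilde\lambda=\lambda(\tilde\beta)=V_n(\tilde\beta)^{-1}\bar g(\tilde\beta)+o_p(m^{-1/2})=O_p(m^{-1/2})$; since the set-up assumes $\theta_{nk}\to\theta_0\in(0,1)$, we have $m=\min\{k,n-k\}\asymp n$, so both $\|\tilde\beta-\beta_0\|$ and $\|\tilde\lambda\|$ are $O_p(n^{-1/2})$. Next I would Taylor-expand $Q_{1n}$ and $Q_{2n}$ about $(\beta_0,0)$. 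Using $Q_{1n}(\tilde\beta,\tilde\lambda)=Q_{2n}(\tilde\beta,\tilde\lambda)=0$, together with $Q_{1n}(\beta_0,0)=\bar g(\beta_0)$ and $Q_{2n}(\beta_0,0)=0$ (the latter since $Q_{2n}$ carries an explicit factor $\lambda$), the expansion reads, in block form,
$$\begin{pmatrix}\bar g(\beta_0)\\ 0\end{pmatrix}+S_n\begin{pmatrix}\tilde\lambda\\ \tilde\beta-\beta_0\end{pmatrix}=o_p(n^{-1/2}),$$
where $S_n$ is the matrix of partial derivatives of Remark~3, so $S_n\to S=\bigl(\begin{smallmatrix}S_{11}&S_{12}\\ S_{21}&0\end{smallmatrix}\bigr)$ in probability. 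By Remark~1 the matrix $S$ is nonsingular; solving with the block-inverse formulas for $S^{-1}$ recorded after Remark~3, the lower block yields
$$\tilde\beta-\beta_0=-Q^\prime\,\bar g(\beta_0)+o_p(n^{-1/2})=\Sigma\,S_{21}S_{11}^{-1}\,\bar g(\beta_0)+o_p(n^{-1/2}).$$

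Then I would apply a central limit theorem to $\sqrt n\,\bar g(\beta_0)$. Writing $\sqrt n\,\bar g(\beta_0)=\sqrt{n/k}\,\bigl(k^{-1/2}\sum_{m\le k}g_1(X_m,\beta_0)\bigr)+\sqrt{n/(n-k)}\,\bigl((n-k)^{-1/2}\sum_{m>k}g_2(X_m,\beta_0)\bigr)$ and noting that $g_1,g_2$ are functions of the innovations forming stationary (martingale-difference, hence mixing) sequences with finite covariance under $E\|\epsilon\|^4<\infty$ and $E(\epsilon\epsilon^\prime)$ positive definite, the two sums are asymptotically independent Gaussians; with $k/n\to\theta_0$ this gives $\sqrt n\,\bar g(\beta_0)\xrightarrow{d}N(0,W)$, $W=\theta_0^{-1}E[g_1g_1^\prime]+(1-\theta_0)^{-1}E[g_2g_2^\prime]=-S_{11}$. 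Substituting,
$$\sqrt n(\tilde\beta-\beta_0)\xrightarrow{d}N\bigl(0,\ \Sigma S_{21}S_{11}^{-1}(-S_{11})S_{11}^{-1}S_{12}\Sigma\bigr)=N\bigl(0,\ \Sigma\,\Sigma^{-1}\,\Sigma\bigr)=N(0,\Sigma),$$
because $\Sigma=(S_{21}(-S_{11})^{-1}S_{12})^{-1}$ is symmetric and $S_{21}=S_{12}^\prime$. Pre-multiplying by $\Sigma^{-1/2}$ gives $\sqrt n\,\Sigma^{-1/2}(\tilde\beta-\beta_0)\xrightarrow{d}N(0,I_{2p+q})$, which is the claim.

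The hard part will be the uniform control of the Taylor remainders: one must show that the quadratic terms in the expansions of $Q_{1n}$ and $Q_{2n}$ over the $O_p(n^{-1/2})$-ball around $(\beta_0,0)$ are genuinely $o_p(n^{-1/2})$. This is exactly where the moment hypotheses of Lemma~1 ($E\|g(X,\beta^*)\|^3<\infty$, and boundedness of $E[\partial g/\partial\beta^\prime]$, $E[\partial^2 g/\partial\beta\partial\beta^\prime]$, $E[(\partial g/\partial\beta^\prime)^\prime g]$ near $\beta_0$) enter, combined with $\max_l|\gamma_l|=o_p(1)$ from Lemma~1; it is also essential that $\theta_{nk}\to\theta_0\in(0,1)$, so that the weights $\theta_m^{-1}$ stay bounded and $o_p(m^{-1/2})=o_p(n^{-1/2})$. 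A secondary technical point is justifying the CLT in the dependent AR$(p)$ setting and verifying that the limiting covariance of $\sqrt n\,\bar g(\beta_0)$ is precisely $-S_{11}$, which is what makes the sandwich $\Sigma S_{21}S_{11}^{-1}(-S_{11})S_{11}^{-1}S_{12}\Sigma$ collapse to $\Sigma$.
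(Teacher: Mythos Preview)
Your proposal is correct and follows essentially the same route as the paper: Taylor-expand the two score equations $Q_{1n},Q_{2n}$ at $(\beta_0,0)$, invert the resulting block system with the formulas of Remark~1/Remark~3, and feed in a CLT for $\sqrt n\,Q_{1n}(\beta_0,0)=\sqrt n\,\bar g(\beta_0)$ with limiting covariance $-S_{11}$, so that the sandwich collapses to $\Sigma$. The paper's write-up is terser (and tacitly uses the same interior condition $m\asymp n$ that you make explicit), but the linearization, block inverse, and CLT step are identical in substance; your extra care about the remainder control and the martingale-difference CLT in the AR$(p)$ setting just fills in details the paper leaves implicit.
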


\begin{proof}
Expanding $ Q_{1n}(\tilde{\beta}, \tilde{\lambda})$ and $ Q_{2n}(\tilde{\beta}, \tilde{\lambda})$ at $(\theta_0,0)$, by the conditions of the $H_0$ and Lemma 1, we have,
\begin{align*}
0 &= Q_{1n}(\tilde{\beta}, \tilde{\lambda})\\
&= Q_{1n}(\beta_0, 0) + \frac{\partial Q_{1n}(\beta_0, 0)}{\partial \beta^\prime} (\tilde{\beta} - \beta_0) + \frac{\partial Q_{1n}(\beta_0, 0)}{\partial \lambda^\prime} (\tilde{\lambda} - 0) + O_p(m^{-1}),
\end{align*}
\begin{align*}
0 &= Q_{2n}(\tilde{\beta}, \tilde{\lambda})\\
&= Q_{2n}(\beta_0, 0) + \frac{\partial Q_{2n}(\beta_0, 0)}{\partial \beta^\prime} (\tilde{\beta} - \beta_0) + \frac{\partial Q_{2n}(\beta_0, 0)}{\partial \lambda^\prime} (\tilde{\lambda} - 0) + O_p(m^{-1}),
\end{align*}

\[
 \begin{pmatrix}
 -Q_{1n}(\beta_0, 0) + O_p(m^{-1}) \\
  \epsilon_k O_p(m^{-1})
 \end{pmatrix} =
 \begin{pmatrix}
 \frac{\partial Q_{1n}}{\partial \lambda^\prime} & \frac{\partial Q_{1n}}{\partial \beta^\prime} \\
  \frac{\partial Q_{2n}}{\partial \lambda^\prime} & 0
 \end{pmatrix}
  \begin{pmatrix}
  \tilde{\lambda} \\
  \tilde{\beta} -\beta_0
 \end{pmatrix}.
\]
By LLN,
\[
\begin{pmatrix}
  \tilde{\lambda} \\
  \tilde{\beta} -\beta_0
 \end{pmatrix} \longrightarrow S^{-1}(\beta_0)
 \begin{pmatrix}
  -Q_{1n}(\beta_0, 0) + O_p(m^{-1}) \\
  \epsilon_k O_p(m^{-1}).
 \end{pmatrix}
\]
By Remark 1,
\[ \tilde{\beta} -\beta_0 = (0 ~I) S^{-1}
\begin{pmatrix}
  -Q_{1n}(\beta_0, 0) + O_p(m^{-1}) \\
  \epsilon_k O_p(m^{-1})
 \end{pmatrix}.
\]
Therefore,
\[
\begin{pmatrix}
  \tilde{\lambda} \\
  \tilde{\beta} -\beta_0
 \end{pmatrix} \longrightarrow
 \begin{pmatrix}
  S_{11}^{-1} + S_{11}^{-1} S_{12} & - S_{11}^{-1} S_{12} \Sigma \\
  -\Sigma S_{21}S_{11}^{-1} & \Sigma
 \end{pmatrix}
 \begin{pmatrix}
  -Q_{1n}(\beta_0, 0) + O_p(m^{-1}) \\
  \epsilon_k O_p(m^{-1})
 \end{pmatrix}.
\]

\begin{align*}
\tilde{\beta} -\beta_0 &\rightarrow -\Sigma S_{21}S_{11}^{-1} \bigg( -Q_{1n}(\beta_0, 0) + O_p(m^{-1}) \bigg) + \Sigma \epsilon_k O_p(m^{-1})\\
&= (S_{21}S_{11}^{-1}S_{12})^{-1} S_{21}S_{11}^{-1}Q_{1n}(\beta_0, 0) - \Sigma S_{21}S_{11}^{-1} O_p(m^{-1}) + \Sigma \epsilon_k O_p(m^{-1})\\
&= \frac{1}{\sqrt{n}} (S_{21}S_{11}^{-1}S_{12})^{-1} S_{21}(-S_{11})^{-1/2}(-S_{11})^{-1/2}\sqrt{n}Q_{1n}(\beta_0, 0) + \epsilon_k O_p(m^{-\frac{1}{2}})
\end{align*}
Since $(-S_{11})^{-1/2}\sqrt{n}Q_{1n}(\beta_0, 0) \rightarrow N(0, I_{2(p+q)})$,
$\sqrt{n}S_{21}S_{11}^{-1}(\tilde{\beta} - \beta_0) \rightarrow N(0, I_{2p+q})$.
\end{proof}
\begin{lemma}
$$-2\log \Lambda_k = 2l_E(\tilde{\beta}_1^0, 0) - 2l_E(\tilde{\beta}_1^0, \tilde{\beta}_2^0),$$
where $\tilde{\beta}_1^0$ minimizes $l_E(\beta, 0)$ with respect to $\beta_1$ under $H_0$,
$$-2\log \Lambda_k  = \Big[ (-S_{11})^{-1/2}\sqrt{n}Q_{1n}(\beta_0, 0) \Big] ^\prime \Delta \Big[ (-S_{11})^{-1/2}\sqrt{n}Q_{1n}(\beta_0, 0) \Big] + O_p(m^{-\frac{1}{2}}) $$
where $$\Delta = (-S_{11})^{-1/2} \Big \{S_{12} [S_{21}(-S_{11})^{-1}S_{12}]^{-1} S_{21} - S_{12,1} [S_{21,1}(-S_{11})^{-1}S_{12,1}]^{-1} S_{21,1} \Big \}(-S_{11})^{-1/2} \geq 0.$$
\end{lemma}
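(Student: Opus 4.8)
The plan is to reduce $-2\log\Lambda_k$ to a quadratic form in the rescaled score $(-S_{11})^{-1/2}\sqrt{n}\,Q_{1n}(\beta_0,0)$ by profiling out the parameters, using the quadratic expansion of $l_E$ already obtained inside the proof of Lemma 1. First I would record the defining identity $-2\log\Lambda_k = Z_{H_0,k}-Z_{H_1,k}=2l_E(\hat\beta^{H_0})-2l_E(\hat\beta^{H_1})$, where, by Qin and Lawless (1994) and Lemma 1, the two suprema are attained at estimators solving the score equations: $\hat\beta^{H_0}=(\tilde\beta_1^0,0)$ minimizes $l_E(\beta)$ over the nuisance block $\beta_1$ with $\delta=0$ fixed, and $\hat\beta^{H_1}=(\tilde\beta_1,\tilde\beta_2)$ minimizes over the full parameter. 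By Lemma 1 both estimators are $O_p(m^{-1/2})$-consistent for $\beta_0$ (and $\sqrt{n}$-asymptotically normal by Lemma 2), so with probability tending to $1$ they lie in the shrinking ball $B(\beta_0,a_m)$ on which the expansion (A.4) is valid.

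Next I would substitute $l_E(\beta)=\tfrac n2\,\bar g'(\beta)\,V_n(\beta)^{-1}\,\bar g(\beta)+o_p(1)$ from that proof and Taylor-expand $\bar g(\beta)$ at $\beta_0$. Using $\bar g(\beta_0)=Q_{1n}(\beta_0,0)$, $\tfrac1n\sum_l\theta_l^{-1}\,\partial g(x_l,\beta_0)/\partial\beta'\to S_{12}$ and $V_n(\beta_0)\to -S_{11}$ (Remark 3), and writing $A:=-S_{11}$, $y:=Q_{1n}(\beta_0,0)$, $v:=\beta-\beta_0$, one gets
\[
l_E(\beta)=\tfrac n2\,(y+S_{12}v)'A^{-1}(y+S_{12}v)+R_n ,
\]
where, because $\|v\|=O_p(m^{-1/2})$, the Taylor remainder together with the $o_p(1)$ slack is $R_n=O_p(m^{-1/2})$ in the regime $\theta_{nk}\to\theta_0\in(0,1)$ (so $m\asymp n$). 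Minimizing the quadratic in $v$ over the full space and applying the Aitchison--Silvey partitioned-inverse identity of Remark 1 to $\begin{pmatrix}-S_{11}&S_{12}\\ S_{21}&0\end{pmatrix}$ gives $2l_E(\hat\beta^{H_1})=n\,y'\bigl[A^{-1}-A^{-1}S_{12}(S_{21}A^{-1}S_{12})^{-1}S_{21}A^{-1}\bigr]y+O_p(m^{-1/2})$. Under $H_0$ the component $\delta$ is frozen at $0$, so only the $\beta_1$-columns $S_{12,1}$ of $S_{12}$ participate in the minimization, yielding $2l_E(\hat\beta^{H_0})=n\,y'\bigl[A^{-1}-A^{-1}S_{12,1}(S_{21,1}A^{-1}S_{12,1})^{-1}S_{21,1}A^{-1}\bigr]y+O_p(m^{-1/2})$.

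Subtracting the two expressions and pulling $A^{-1/2}=(-S_{11})^{-1/2}$ out of each side produces exactly the stated identity, with $\Delta=(-S_{11})^{-1/2}\{S_{12}[S_{21}(-S_{11})^{-1}S_{12}]^{-1}S_{21}-S_{12,1}[S_{21,1}(-S_{11})^{-1}S_{12,1}]^{-1}S_{21,1}\}(-S_{11})^{-1/2}$. For nonnegativity, set $P:=(-S_{11})^{-1/2}S_{12}$ and $P_1:=(-S_{11})^{-1/2}S_{12,1}$; then the two bracketed blocks of $\Delta$ are $P(P'P)^{-1}P'$ and $P_1(P_1'P_1)^{-1}P_1'$, the orthogonal projectors onto $\mathrm{range}(P)$ and $\mathrm{range}(P_1)$ respectively, and since $S_{12,1}$ is a column sub-block of $S_{12}$ we have $\mathrm{range}(P_1)\subseteq\mathrm{range}(P)$. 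Hence $\Delta$ is a difference of nested orthogonal projectors and is therefore nonnegative definite; equivalently one can invoke Remark 2 applied to the Gram matrix.

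The hard part will be the error bookkeeping: controlling the Taylor remainder of $\bar g(\beta)$ and the $o_p(1)$ slack in (A.4) tightly enough to collapse everything into a single $O_p(m^{-1/2})$ term, and --- if one wants the representation to hold uniformly over the trimmed range $\Theta_n^*$ rather than merely for $k$ with $\theta_{nk}\to\theta_0$ --- combining this bound with the trimming $n_{T_1}=n_{T_2}=2[n^{1/2}]$ so that $m$ does not grow too slowly. A secondary but essential point is making precise which sub-blocks of $S(\beta)$ enter the $H_0$ versus $H_1$ profiling, i.e. that fixing $\delta=0$ removes exactly the $\delta$-columns of $S_{12}$, so that the two profiled quadratic forms differ by the projection onto the efficient direction for $\delta$; Remarks 1 and 2 do this linear-algebra step.
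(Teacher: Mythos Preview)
Your proposal is correct and follows essentially the same route as the paper: profile out $\lambda$ to get the quadratic approximation $l_E(\beta)\approx\tfrac{n}{2}\bar g'(\beta)V_n(\beta)^{-1}\bar g(\beta)$, minimize over the full parameter and over $\beta_1$ alone (the paper simply cites Qin and Lawless (1994) for the resulting expressions with $B=S_{11}^{-1}+S_{11}^{-1}S_{12}\Sigma S_{21}S_{11}^{-1}$ and its $H_0$ analogue), subtract, and factor out $(-S_{11})^{-1/2}$. Your nested-projector argument for $\Delta\ge 0$ is a clean alternative to the paper's, which writes $\Delta$ in block form and applies Remark~2 to the Gram matrix $S_{21}(-S_{11})^{-1}S_{12}$; both arrive at the same conclusion.
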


\begin{proof}
Similar to Qin and Lawless (1994), we can derive,
$$l_E(\tilde{\beta}_1^0, \tilde{\beta}_2^0) = -\frac{n}{2} Q_{1n}^\prime (\beta_0, 0) B Q_{1n} (\beta_0, 0) +  O_p(m^{-\frac{1}{2}}), $$
where $B = S_{11}^{-1} + S_{11}^{-1} S_{12} \Sigma S_{21}S_{11}^{-1}$, and
$$l_E(\tilde{\beta}_1^0, 0) = -\frac{n}{2} Q_{1n}^\prime (\beta_0, 0) A Q_{1n} (\beta_0, 0) +  O_p(m^{-\frac{1}{2}}), $$
where $A = S_{11}^{-1} + S_{11}^{-1} S_{12,1} (S_{21,1}S_{11}^{-1}S_{12,1})^{-1} S_{21,1}S_{11}^{-1}$.
Then,
\begin{align*}
2 \Big[ l_E(\tilde{\beta}_1^0, 0) - l_E(\tilde{\beta}_1^0, \tilde{\beta}_2^0)\Big]
&=\bigg[ -Q_{1n}^\prime (\beta_0, 0) A Q_{1n} (\beta_0, 0) +  O_p(m^{-\frac{1}{2}})\bigg] + \\
&\bigg[n Q_{1n}^\prime (\beta_0, 0) B Q_{1n} (\beta_0, 0) +  O_p(m^{-\frac{1}{2}})\bigg]\\
&= n Q_{1n}^\prime (\beta_0, 0) (B-A) Q_{1n} (\beta_0, 0) +  O_p(m^{-\frac{1}{2}})\\
&= n Q_{1n}^\prime (\beta_0, 0) S_{11}^{-1} \Big[S_{12} \Sigma S_{21} - S_{12,1} \Sigma^* S_{12,2}\Big] S_{11}^{-1} Q_{1n} (\beta_0, 0) +  O_p(m^{-\frac{1}{2}})
\tag{$B-A = S_{11}^{-1} + S_{11}^{-1} S_{12} \Sigma S_{21}S_{11}^{-1} - S_{11}^{-1} - S_{11}^{-1} S_{12,1} (S_{21,1}S_{11}^{-1}S_{12,1})^{-1} S_{21,1}S_{11}^{-1}$.}\\
\tag{So, $\Sigma^*=(S_{21,1}S_{11}^{-1}S_{12,1})^{-1}$}\\
&= \Big[ (-S_{11})^{-1/2}\sqrt{n}Q_{1n}(\beta_0, 0)\Big]^\prime (-S_{11})^{-1/2} \Big[S_{12} \Sigma S_{21} - S_{12,1} \Sigma^* S_{12,2}\Big]\\
&(-S_{11})^{-1/2} \Big[(-S_{11})^{-1/2}\sqrt{n}Q_{1n}(\beta_0, 0)\Big] + O_p(m^{-\frac{1}{2}}).
\end{align*}

\noindent Take $\Delta = (-S_{11})^{-1/2} \Big[S_{12} \Sigma S_{21} - S_{12,1} \Sigma^* S_{12,2}\Big] (-S_{11})^{-1/2}$. Now,
\begin{align*}
\Delta &= (-S_{11})^{-1/2} \Big[S_{12} \Big(S_{21}(-S_{11}^{-1})S_{12}\Big)^{-1} S_{21} - S_{12,1} \Big(S_{21,1}S_{11}^{-1}S_{12,1}\Big)^{-1} S_{12,2}\Big] (-S_{11})^{-1/2}\\
&= (-S_{11})^{-1/2} (S_{12,1},~ S_{12,2}) \Bigg\{[S_{21}(-S_{11})^{-1}S_{12}]^{-1} - \begin{pmatrix}
    \Big(S_{21,1}S_{11}^{-1}S_{12,1}\Big)^{-1} & 0 \\
    0 & 0
\end{pmatrix} \Bigg\}\\
&\times \begin{pmatrix}
    S_{21,1}\\
    S_{21,2}
\end{pmatrix}
(-S_{11})^{-1/2} \\
&\geq 0. \tag{By Remark 2}
\end{align*}
\end{proof}
\begin{lemma}
Under the conditions of Theorem 1 and the null hypothesis, denote
$U_{n_k} = \Big\{\frac{k}{n}: \frac{T}{n} \leq (1-\frac{T}{n})\Big\}$, for all $\delta>0$, we can find $C = C(\delta)$, $T=T(\delta)$ and $N=N(\delta)$ such that
$$P\bigg( \smash{\displaystyle\max_{\frac{k}{n} \in U_{n_k}}} \Big(\frac{m}{\log\log m}\Big)^{1/2} \parallel \frac{\tilde{\lambda}}{\epsilon_k}\parallel > C \bigg) \leq \delta, \quad
P\bigg( n^{-1/2} \smash{\displaystyle\max_{\frac{k}{n} \in U_{n_k}}} m \parallel \frac{\tilde{\lambda}}{\epsilon_k}\parallel > C \bigg) \leq \delta, $$
$$P\bigg( \smash{\displaystyle\max_{\frac{k}{n} \in U_{n_k}}} \Big(\frac{m}{\log\log m}\Big)^{1/2} \parallel \tilde{\theta} - \theta_0 \parallel > C \bigg) \leq \delta, \quad
P\bigg( n^{-1/2} \smash{\displaystyle\max_{\frac{k}{n} \in U_{n_k}}} m \parallel \tilde{\theta} - \theta_0 \parallel > C \bigg) \leq \delta.$$
\end{lemma}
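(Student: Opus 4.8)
\noindent The plan is to reduce the four inequalities to a single maximal inequality for the centred partial sums of the estimating function at the truth, and then feed it through the linearisations of Lemmas~1 and~2. Under $H_0$ one has $\Phi=\Phi^\ast$, so there is a single stationary sequence $g(X_l,\beta_0)=(X_l,\,X_{l-1}\epsilon_l,\dots,X_{l-p}\epsilon_l,\,\epsilon_l^2-\sigma^2)$, where $\epsilon_l$ is i.i.d.\ with $E\epsilon_l=0$, $E\epsilon_l^2=\sigma^2$, $E\|\epsilon_l\|^4<\infty$. Its coordinates $X_{l-r}\epsilon_l$ are martingale differences for $\mathcal F_{l-1}=\sigma(\epsilon_j:j\le l-1)$, since $E[X_{l-r}\epsilon_l\mid\mathcal F_{l-1}]=X_{l-r}E\epsilon_l=0$; the coordinate $\epsilon_l^2-\sigma^2$ is i.i.d.\ with finite variance; and the coordinate $X_l$ is a stationary linear process in $(\epsilon_j)$ with geometrically decaying coefficients. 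Writing $S_k:=\sum_{l=1}^k g(X_l,\beta_0)$ and $S^{\mathrm r}_j:=\sum_{l=n-j+1}^n g(X_l,\beta_0)$ for the forward and reversed partial sums, the moment hypotheses of Theorem~1 give (i) a Kolmogorov--Doob maximal inequality $E\max_{j\le N}\|S_j\|^2\le cN$ and $E\max_{j\le N}\|S^{\mathrm r}_j\|^2\le cN$ with $c$ independent of $n$, and (ii) the law of the iterated logarithm $\limsup_k\|S_k\|/(k\log\log k)^{1/2}\le\kappa<\infty$ a.s.; the martingale-difference and i.i.d.\ coordinates are handled by Doob's inequality and the martingale LIL, and the linear-process coordinate by a martingale approximation.

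\noindent Next I would record from the proof of Lemma~1 that on the trimmed set $U_{n_k}=\{k/n:\;T/n\le k/n\le 1-T/n\}$ one has $V_n(\beta_0)=\epsilon_k^{-1}G_k+o_p(\epsilon_k^{-1})$ with $G_k=\epsilon_k\theta_{nk}^{-1}Eg_1g_1'+\epsilon_k(1-\theta_{nk})^{-1}Eg_2g_2'$ uniformly bounded and uniformly nonsingular (its coefficients lie in $(0,1]$ with one equal to $1$, and $Eg_ig_i'$ is positive definite), and similarly that $S_{11},S_{12},\Sigma$ scale like $\epsilon_k^{-1}$ (resp.\ $\epsilon_k$) with uniformly bounded, uniformly nonsingular reduced versions $\epsilon_kS_{11},\epsilon_kS_{12},\epsilon_k^{-1}\Sigma$. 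Since $Q_{1n}(\beta_0,0)=\bar g_k(\beta_0)$ with $\bar g_k(\beta_0)=\tfrac1kS_k+\tfrac1{n-k}S^{\mathrm r}_{n-k}$, the expansions of Lemmas~1--2 (together with $\tilde\beta-\beta_0=O_p(n^{-1/2})$) reduce to
$$\frac{\tilde\lambda}{\epsilon_k}=N_1\,\bar g_k(\beta_0)+o_p(m^{-1/2}),\qquad \tilde\theta-\theta_0=\tilde\beta-\beta_0=\epsilon_k\,N_2\,\bar g_k(\beta_0)+\epsilon_k\,o_p(m^{-1/2}),$$
where $\sup_{k}\|N_i\|<\infty$ and the $o_p$--terms are, on inspection of Lemma~1's Taylor expansions, uniform over $k/n\in U_{n_k}$. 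Because the $\tilde\theta$--bounds carry the harmless extra factor $\epsilon_k\le1$, it now suffices to prove, for every $\delta>0$ and suitable $C=C(\delta),T=T(\delta),N=N(\delta)$, that for $n\ge N$
$$P\Big(\max_{k/n\in U_{n_k}}(m/\log\log m)^{1/2}\|\bar g_k(\beta_0)\|>C\Big)\le\delta,\qquad P\Big(n^{-1/2}\max_{k/n\in U_{n_k}}m\|\bar g_k(\beta_0)\|>C\Big)\le\delta.$$

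\noindent By the symmetry $k\leftrightarrow n-k$ I would work on $T\le k\le n/2$, where $m=k$ and $n-k\ge n/2$, splitting $\bar g_k(\beta_0)$ into the near part $\tfrac1kS_k$ and the far part $\tfrac1{n-k}S^{\mathrm r}_{n-k}$. For the near part, $(m/\log\log m)^{1/2}\|\tfrac1kS_k\|=\|S_k\|/(k\log\log k)^{1/2}$, and $\sup_{k\ge T}\|S_k\|/(k\log\log k)^{1/2}$ decreases in $T$ to the a.s.\ limit $\limsup_k\|S_k\|/(k\log\log k)^{1/2}\le\kappa$ by (ii), so one can choose $T$ with $P(\sup_{k\ge T}\|S_k\|/(k\log\log k)^{1/2}>\kappa+1)\le\delta/4$, uniformly in $n$; note that a crude Hájek--Rényi bound fails here because $\sum_k(k\log\log k)^{-1}=\infty$, so the LIL (or the dyadic blocking underlying its upper half) is genuinely needed. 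For the far part, since $n-k\ge n/2$, $k\le n-k$ and (taking $T$ large) $\log\log k\ge1$,
$$(m/\log\log m)^{1/2}\Big\|\tfrac1{n-k}S^{\mathrm r}_{n-k}\Big\|\le\frac{\|S^{\mathrm r}_{n-k}\|}{((n-k)\log\log k)^{1/2}}\le\frac{\sqrt2}{\sqrt n}\,\max_{j\le n}\|S^{\mathrm r}_j\|,$$
and $P(\max_{j\le n}\|S^{\mathrm r}_j\|>\eta\sqrt n)\le c/\eta^2$ by (i), so a large $\eta$ controls this term; adding gives the first inequality with $C=\kappa+1+\sqrt2\,\eta$ (the $o_p$ remainder being $o_p(1)$ at this scale and absorbed into $C$). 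The second (CLT--scale) inequality is the same but cruder: $n^{-1/2}m\|\bar g_k(\beta_0)\|\le n^{-1/2}\|S_k\|+n^{-1/2}\tfrac{k}{n-k}\|S^{\mathrm r}_{n-k}\|\le n^{-1/2}\max_{j\le n}\|S_j\|+\sqrt2\,n^{-1/2}\max_{j\le n}\|S^{\mathrm r}_j\|$, and (i) makes each term $\le(1+\sqrt2)\eta$ off an event of probability $\le\delta/2$.

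\noindent The main obstacle is the first ingredient above: upgrading the expansions of Lemmas~1 and~2 from pointwise in $k$ to uniform over $k/n\in U_{n_k}$. Concretely, one must re-run the Taylor expansions of $f(\lambda)$ and of $l_E(\beta)$ and check that $\max_l\|\theta_l^{-1}g(X_l,\beta_0)\|=o((n/m)m^{1/2})$, $\max_l|\gamma_l|=o_p(1)$, $V_n(\beta)=\epsilon_k^{-1}G_k+o_p(\epsilon_k^{-1})$, and the negligibility of the cubic remainder all persist with the supremum additionally taken over $k/n\in U_{n_k}$; this is exactly where $E\|g(X,\beta^\ast)\|^3<\infty$ and the trimming $T/n\le k/n\le1-T/n$ enter, and the reason $T$ must be permitted to grow with $\delta$. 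The second ingredient---the LIL and the maximal inequality for the dependent sequence $g(X_l,\beta_0)$---is comparatively routine under the stated fourth-moment assumption on $\epsilon$ and the geometric decay of the AR covariances. With both in hand, the remaining estimates are the elementary bookkeeping displayed above.
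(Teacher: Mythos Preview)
Your proposal is sound and, in fact, supplies far more detail than the paper does: the paper's proof of this lemma consists solely of the sentence ``The proof is similar to Lemma 1.2.2 of Cs\"{o}rg\H{o} and Horv\'{a}th (1997),'' with no further argument. Your sketch---linearising $\tilde\lambda/\epsilon_k$ and $\tilde\beta-\beta_0$ through the expansions of Lemmas~1--2, reducing to maximal bounds on the centred partial sums $\bar g_k(\beta_0)$, and then handling the near part via the LIL and the far part via a Doob/Kolmogorov maximal inequality---is precisely the Cs\"{o}rg\H{o}--Horv\'{a}th strategy adapted to the present empirical-likelihood/AR setting, so in that sense you are not taking a different route but rather unpacking the reference the paper invokes.

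Two remarks on execution. First, your identification of the genuine obstacle (upgrading the Lemma~1 expansions from pointwise-in-$k$ to uniform over $k/n\in U_{n_k}$) is exactly right, and is indeed where the trimming and the third-moment condition earn their keep; the paper is silent on this uniformity and simply relies on the cited lemma to absorb it. Second, for the AR coordinate $X_l$ you appeal to a martingale approximation to get the LIL and maximal inequality; this is correct but requires a short justification (e.g.\ the Wold representation with geometrically summable coefficients gives $X_l=\sum_{j\ge0}\psi_j\epsilon_{l-j}$, and the approximating martingale is built from the innovations), which Cs\"{o}rg\H{o}--Horv\'{a}th themselves handle in the i.i.d.\ case and which here needs the stationarity and $E\|\epsilon\|^4<\infty$ you already invoke. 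With those two points filled in, your argument is complete and more explicit than what the paper offers.
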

\begin{proof}
The proof is similar to Lemma 1.2.2 of Cs\"{o}rg\'{o} and Horv\'{a}th (1997).
\end{proof}

\begin{lemma}
Under the conditions of Theorem 1 and $H_0$, for all $0\leq \alpha < \frac{1}{2}$ we have:
$$n^\alpha  \smash{\displaystyle\max_{\frac{k}{n} \in U_{n_k}}} \big[ \theta_k (1-\theta_k) \big]^\alpha | -2\log \Lambda - R_k| = O_p(1),$$
$$ \smash{\displaystyle\max_{\frac{k}{n} \in U_{n_k}}} \big[ \theta_k (1-\theta_k) \big]^\alpha | -2\log \Lambda - R_k| = O_p(n^{-\frac{1}{2}}(\log\log n)^{\frac{3}{2}}),$$
where $\Theta_{nk}=\{k:\delta_1 \leq k \leq n-\delta_2\}$
\end{lemma}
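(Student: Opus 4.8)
The plan is to promote the pointwise expansion of Lemma~3 to one that holds uniformly in $k$ over the trimmed range, keeping explicit track of how every remainder scales with $m=\min\{k,n-k\}$. Recall that Lemma~3 gives, for each fixed $k$,
$$-2\log\Lambda_k = R_k + O_p(m^{-1/2}), \qquad R_k = \bigl[(-S_{11})^{-1/2}\sqrt n\,Q_{1n}(\beta_0,0)\bigr]^\prime \Delta\, \bigl[(-S_{11})^{-1/2}\sqrt n\,Q_{1n}(\beta_0,0)\bigr],$$
so the entire content of the lemma is that the $O_p(m^{-1/2})$ term may be taken uniform in $k$ of order $m^{-1/2}(\log\log m)^{3/2}$ with probability tending to one. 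Granting that, the two displayed bounds follow from elementary arithmetic: $n^\alpha[\theta_k(1-\theta_k)]^\alpha = (k(n-k))^\alpha n^{-\alpha}\le m^\alpha$, together with $m^{\alpha-1/2}(\log\log m)^{3/2}=O(1)$ whenever $\alpha<\tfrac12$, which yields the first bound over the full trimmed range; and, on the subrange where $k/n$ stays bounded away from $0$ and $1$ so that $m\asymp n$, the factor $[\theta_k(1-\theta_k)]^\alpha\le 1$ converts $m^{-1/2}(\log\log n)^{3/2}$ into the rate $n^{-1/2}(\log\log n)^{3/2}$ of the second bound.

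The ingredients I would assemble are: (i) H\'ajek--R\'enyi type maximal inequalities for the partial-sum families $\tfrac1k\sum_{l\le k}g_1(X_l,\beta_0)$, $\tfrac1{n-k}\sum_{l>k}g_2(X_l,\beta_0)$ and for the corresponding derivative arrays $\tfrac1k\sum_{l\le k}\partial g_1/\partial\beta^\prime$, which under $E\|g(X,\beta^*)\|^3<\infty$ deliver $\max_k (k/\log\log k)^{1/2}\|\tfrac1k\sum_{l\le k}g_1(X_l,\beta_0)\|=O_p(1)$ and its right-end mirror; (ii) the crude uniform bound $\max_{l\le n}\|g(X_l,\beta_0)\|=o(n^{1/3})$ a.s., again from the third-moment assumption; (iii) the uniform control of $\|\tilde\lambda/\epsilon_k\|$ and $\|\tilde\theta-\theta_0\|$ supplied by Lemma~4; and (iv) a uniform law of large numbers giving $V_n(\beta)=\epsilon_k^{-1}[E g_1g_1^\prime + E g_2g_2^\prime]+o_p(\epsilon_k^{-1})$ and the convergence of the score-derivative matrix to the limit $S$ of Remark~3, uniformly for $\beta$ in a neighbourhood of $\beta_0$. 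Here the positive-definiteness of $E(g_ig_i^\prime)$ and of $E(\epsilon\epsilon^\prime)$ guarantees that $V_n(\beta)$, $-S_{11}$ and the Schur complements of Remarks~1--2 are invertible with smallest eigenvalue bounded below uniformly over the trimmed range.

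With these in hand I would re-run the expansion underlying Lemma~1 and Lemma~3 with uniform bookkeeping. Writing $\gamma_l=\lambda^\prime(\beta)\theta_l^{-1}g(X_l,\beta)$, (ii) and (iii) give $\max_l|\gamma_l|=o_p(1)$ uniformly in $k$, so $|1+\gamma_l|^{-1}$ is uniformly bounded and the cubic remainder in the representation $\lambda(\beta)=V_n(\beta)^{-1}\bar g(\beta)+\varepsilon$ is, uniformly, a product of powers of $\epsilon_k$, of $\max_l\|g(X_l,\beta_0)\|$ and of $\|\tilde\lambda/\epsilon_k\|$ that collapses, after the $\epsilon_k$-powers cancel, to $\epsilon_k\,O_p(m^{-1}\log\log m)$; substituting into $l_E(\beta)=\tfrac n2\bar g^\prime(\beta)V_n(\beta)^{-1}\bar g(\beta)+o_p(1)$ uniformizes both occurrences $l_E(\tilde\beta_1^0,0)$ and $l_E(\tilde\beta_1^0,\tilde\beta_2^0)$ in Lemma~3. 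I would then replace $V_n$ and the matrices $A$, $B$ of Lemma~3 by their $S$-block limits via (iv), noting that $B-A=S_{11}^{-1}\{S_{12}\Sigma S_{21}-S_{12,1}\Sigma^{*}S_{21,1}\}S_{11}^{-1}$ is exact in the limit and that the substitution error is again of order $\epsilon_k$ times the uniform LLN rate. Combining everything through the identity $-2\log\Lambda_k = 2l_E(\tilde\beta_1^0,0) - 2l_E(\tilde\beta_1^0,\tilde\beta_2^0)$ yields $-2\log\Lambda_k = R_k + r_k$ with $\max_k m^{1/2}(\log\log n)^{-3/2}|r_k| = O_p(1)$ over the trimmed range, which is precisely what is needed.

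The main obstacle is the uniformity of these Taylor remainders near the two endpoints of the range: a naive union bound over the $O(n)$ admissible values of $k$ would cost an extra $\log n$ and overwhelm the $(\log\log n)^{3/2}$ rate, so one must exploit the nested (monotone-in-$k$) structure of the partial sums and apply the H\'ajek--R\'enyi inequality from each end, exactly as in Cs\"{o}rg\H{o} and Horv\'{a}th (1997, Sections~1.2--1.3), and then verify that the weight $[\theta_k(1-\theta_k)]^\alpha$ exactly cancels the $m^{-1/2}$ blow-up of $r_k$ as $k$ approaches either boundary. Once the weighted remainder is shown to be $O_p(1)$, and $O_p(n^{-1/2}(\log\log n)^{3/2})$ on the subrange where $m\asymp n$, the proof is complete.
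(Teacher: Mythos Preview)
The paper does not prove this lemma: Lemma~5 is stated and the text then moves directly to the proof of Theorem~1, which itself is only a one-line reference to Theorem~1.3.1 (Theorem~A.3.4) of Cs\"{o}rg\H{o} and Horv\'{a}th (1997). Your proposal is therefore considerably more detailed than anything the paper supplies. The strategy you outline---uniformising the pointwise expansion of Lemma~3 via H\'ajek--R\'enyi maximal inequalities and the uniform Lagrange-multiplier bounds of Lemma~4, carrying the $m^{-1/2}(\log\log m)^{3/2}$ rate through the Taylor remainders, and then checking that the weight $[\theta_k(1-\theta_k)]^\alpha\asymp (m/n)^\alpha$ absorbs the endpoint blow-up---is exactly the Cs\"{o}rg\H{o}--Horv\'{a}th machinery that the paper invokes by reference for both Lemma~4 and Theorem~1. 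Your reading that the second display is meant on the subrange where $m\asymp n$ (the $\Theta_{nk}$ appearing at the end of the statement) rather than on the full $U_{n_k}$ is also the correct one; otherwise the $n^{-1/2}$ rate would fail at the boundary, as you implicitly note. In short, your approach is the intended one---you have simply written out what the paper leaves to citation.
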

\noindent\textbf{Proof of Theorem 1:}
\begin{proof}
The proof of Theorem 1 is similar to the proof of Theorem 1.3.1 (Theorem A.3.4) of Cs\"{o}rg\'{o} and Horv\'{a}th (1997) which derives the null distribution of the trimmed test statistic. 
\end{proof}

\noindent\textbf{Proof of Theorem 2:}
\begin{proof}
The ELR test statistic is,
$$-2 \log \Lambda_k = Z_{H_0,k_0} - Z_{H_1,k_0}.$$
Under $H_1$, $Z_{H_1,k_0}$ also follows an asymptotic $\chi^2$ distribution. Therefore, $Z_{H_1,k_0} = O_p(1)$. We only need to prove that $P(Z_{H_0,k_0}>cn) \rightarrow 1$ for a positive constant $c$ under $H_1$. For any fixed $\varepsilon$, we can obtain
\begin{align*}
\frac{1}{2n}Z_{H_0,k_0} &= \smash{\displaystyle\sup_{\lambda}}\frac{1}{n}\sum_{l=1}^n \log \Big[ 1 + \theta_l^{-1} \lambda^\prime g(x_l, \varepsilon) \Big]\\
&= \smash{\displaystyle\sup_{\lambda_1}}\frac{1}{n}\sum_{l=1}^{k_0} \log \Big[ 1 + \theta_{k_0}^{-1} \lambda_1^\prime g_1(x_l, \varepsilon) \Big] + \smash{\displaystyle\sup_{\lambda_2}}\frac{1}{n}\sum_{l=k_0+1}^{n} \log \Big[ 1 + (1- \theta_{k_0})^{-1} \lambda_2^\prime g_2(x_l, \varepsilon) \Big]\\
&\xrightarrow{\text{a.s.}}\smash{\displaystyle\sup_{\lambda_1}} \theta_0 E \log \Big( 1 + \theta_0^{-1} \lambda_1^\prime g_1(x_l, \varepsilon) \Big) + \smash{\displaystyle\sup_{\lambda_2}} (1-\theta_0) E \big[ \log \Big( 1 + (1-\theta_0)^{-1} \lambda_2^\prime g_2(x_l, \varepsilon)\Big)\big]
\end{align*}
By Jensen's inequality,
\begin{align*}
E \log \Big( 1 &+ \theta_0^{-1} \lambda_1^\prime g_1(x_l, \varepsilon) \Big)  \leq \log \bigg[E \Big( 1 + \theta_0^{-1} \lambda_1^\prime g_1(x_l, \varepsilon) \Big) \bigg] = 0\\
&\Longrightarrow \smash{\displaystyle\sup_{\lambda_1}} \theta_0 E \log \Big( 1 + \theta_0^{-1} \lambda_1^\prime g_1(x_l, \varepsilon) \Big) = 0.
\end{align*}
Thus,
\begin{align*}
\frac{1}{2n}Z_{H_0,k_0} &\xrightarrow{\text{a.s.}} \smash{\displaystyle\sup_{\lambda_2}} (1-\theta_0) E \big[ \log \Big( 1 + (1-\theta_0)^{-1} \lambda_2^\prime g_2(x_l, \varepsilon)\Big)\big] \\
&\leq (1-\theta_0) c_0
\end{align*}
Hence, $P(Z_{H_0, k_0} \geq (1-\theta_0c_0) \rightarrow 1$.
Thus, the proof.
\end{proof}

\noindent\textbf{Proof of Theorem 3:}
\begin{proof}
To prove: For arbitrary small $\frac{\theta_0}{2} > \eta$, $|\frac{k_0-k}{n}|\geq \eta$, $-2\log\Lambda_k$ cannot arrive at its maximum with probability approaching to 1. \\
Without loss of generality, suppose $k<k_0$ and $\frac{k_0-k}{n}\geq \eta$. Then we have,
$$-2 \log \Lambda_{k_0} - (-2 \log \Lambda_k) = (Z_{H_0,k_0} - Z_{H_1,k_0}) - (Z_{H_0,k} - Z_{H_1,k}).$$
Since $Z_{H_1,k_0} = O_p(1)$
\begin{align*}
\frac{1}{2n}(Z_{H_0,k_0} & -Z_{H_0,k} + Z_{H_1,k}) \\
&= \smash{\displaystyle\sup_{\lambda_1}}\frac{1}{n}\sum_{l=1}^{k_0} \log \Big[ 1 + \theta_{k_0}^{-1} \lambda_1^\prime g_1(x_l, \beta_0) \Big] + \smash{\displaystyle\sup_{\lambda_2}}\frac{1}{n}\sum_{l=k_0+1}^{n} \log \Big[ 1 + (1- \theta_{k_0})^{-1} \lambda_2^\prime g_2(x_l, \beta_0) \Big]\\
&- \smash{\displaystyle\sup_{\lambda_1}}\frac{1}{n}\sum_{l=1}^{k} \log \Big[ 1 + \theta_{k}^{-1} \lambda_1^\prime g_1(x_l, \beta_0) \Big] + \smash{\displaystyle\sup_{\lambda_2}}\frac{1}{n}\sum_{l=k+1}^{n} \log \Big[ 1 + (1- \theta_{k})^{-1} \lambda_2^\prime g_2(x_l, \beta_0) \Big]\\
&+ \smash{\displaystyle\sup_{\lambda_1}}\frac{1}{n}\sum_{l=1}^{k} \log \Big[ 1 + \theta_{k}^{-1} \lambda_1^\prime g_1(x_l, \beta_0) \Big] + \smash{\displaystyle\sup_{\lambda_2}}\frac{1}{n}\sum_{l=k+1}^{n} \log \Big[ 1 + (1- \theta_{k})^{-1} \lambda_2^\prime g_2(x_l, \beta_1) \Big]\\
&\geq  \smash{\displaystyle\sup_{\lambda_2}}\frac{1}{n}\sum_{l=k_0+1}^{n} \log \Big[ 1 + (1- \theta_{k_0})^{-1} \lambda_2^\prime g_2(x_l, \beta_0) \Big] - \smash{\displaystyle\sup_{\lambda_2}}\frac{1}{n}\sum_{l=k+1}^{n} \log \Big[ 1 + (1- \theta_{k})^{-1} \lambda_2^\prime g_2(x_l, \beta_0) \Big]\\
&=  \smash{\displaystyle\sup_{\lambda_2}}\frac{1}{n}\sum_{l=k_0+1}^{n} \log \Big[ 1 + (1- \theta_{k_0})^{-1} \lambda_2^\prime g_2(x_l, \beta_0) \Big] \\
&- \smash{\displaystyle\sup_{\lambda_2}}\frac{1}{n}\sum_{l=k+1}^{n} \log \Big[ 1 + \rho_k(1- \theta_{k_0})^{-1} \lambda_2^\prime g_2(x_l, \beta_0) \Big] \tag{$\rho_k=\frac{n-k_0}{n-k}$}\\
&=  \smash{\displaystyle\sup_{\lambda_2}}\frac{1}{n}\sum_{l=k_0+1}^{n} \log \Big[ 1 + (1- \theta_{k_0})^{-1} \lambda_2^\prime g_2(x_l, \beta_0) \Big] \\
&- \smash{\displaystyle\sup_{\lambda_2}}\frac{1}{n}\sum_{l=k+1}^{n} \log \Big[ 1 + (1- \theta_{k_0})^{-1} \lambda_2^\prime g_2(x_l, \beta_0) \Big]
\tag{Since, $\frac{n-k_0}{n}\leq \rho_k\leq\frac{n-k_0}{n-k_0+n\eta}$. So, $1-\theta_0\leq \varliminf \rho_k \leq \varlimsup \rho_k \leq \frac{1-\theta_0}{1-\theta_0+\eta}$}\\
&\xrightarrow{\text{a.s.}}\smash{\displaystyle\sup_{\lambda_1}} (1-\theta_0) E \big[ \log \Big( 1 + (1-\theta_0)^{-1} \lambda_2^\prime g_2(x_l, \beta_0)\Big)\big] \\
&- \smash{\displaystyle\sup_{\lambda_1}} \big\{\frac{k_0-k}{n} E \big[ \log \Big( 1 + (1-\theta_0)^{-1} \lambda_2^\prime g_2(x_l, \beta_0)\Big)\big]\\
&+ (1-\theta_0) E \big[ \log \Big( 1 + (1-\theta_0)^{-1} \lambda_2^\prime g_2(x_l, \beta_0)\Big)\big]\big\}\\
&\geq \smash{\displaystyle\sup_{\lambda_1}} (1-\theta_0) E \big[ \log \Big( 1 + (1-\theta_0)^{-1} \lambda_2^\prime g_2(x_l, \beta_0)\Big)\big] \\
&- \smash{\displaystyle\sup_{\lambda_1}} \big\{\eta E \big[ \log \Big( 1 + (1-\theta_0)^{-1} \lambda_2^\prime g_2(x_l, \beta_0)\Big)\big] \\
&+ (1-\theta_0) E \big[ \log \Big( 1 + (1-\theta_0)^{-1} \lambda_2^\prime g_2(x_l, \beta_0)\Big)\big]\big\} \tag{By Jensen's inequality and $\frac{k_0-k}{n}\geq \eta$.}\\
\end{align*}
Assume that $\smash{\displaystyle\sup_{\lambda_1}} \big\{\eta E \big[ \log \Big( 1 + (1-\theta_0)^{-1} \lambda_2^\prime g_2(x_l, \beta_0)\Big)\big] + (1-\theta_0) E \big[ \log \Big( 1 + (1-\theta_0)^{-1} \lambda_2^\prime g_2(x_l, \beta_0)\Big)\big]\big\}$ attains its maximum at $\delta_2^*$. Then we have,
\begin{align*}
\frac{1}{2n}(Z_{H_0,k_0} & -Z_{H_0,k} + Z_{H_1,k}) \\
&\geq
\begin{cases}
\smash{\displaystyle\sup_{\lambda_1}} (1-\theta_0) E \big[ \log \Big( 1 + (1-\theta_0)^{-1} \lambda_2^\prime g_2(x_l, \beta_0)\Big)\big], \text{ if $\delta_2^*=0$},\\
-\eta E \big[ \log \Big( 1 + (1-\theta_0)^{-1} \lambda_2^{*\prime} g_2(x_l, \beta_0)\Big)\big], \text{ if $\delta_2^*\neq 0$}.
\end{cases}
\end{align*}
Therefore, by the condition that for every fixed parameter $\delta=\beta^*-\beta\neq0$, there exists a positive constant $c_0>0$ satisfy that $\infty > \smash{\displaystyle\inf_{\delta\neq 0}} \smash{\displaystyle\sup_{\lambda}} E\log \big[ 1 + \lambda^\prime x(x^\prime \delta + e) \big] \geq c_0 >0$ and Jensen's inequality, there exists a constant $c_0>0$, such that $P\big(\frac{1}{2n}(Z_{H_0,k_0} -Z_{H_0,k} + Z_{H_1,k}) > c_0\big) \rightarrow 1$ as $n\rightarrow \infty$. Thus, we have, $P\big[\big(-2 \log \Lambda_{k_0} - (-2 \log \Lambda_k)\big)>cn\big]\rightarrow 1$, since $Z_{H_1,k_0} = O_p(1)$. So, $-2\log\Lambda_k$ cannot arrive at its maximum with probability approaching to 1. By the definition of $\hat{k}$, we have $|\frac{k_0-\hat{k}}{n}|\leq \eta$ with probability approaching to 1. Since $\eta$ is arbitrary, thus the proof. \\
\end{proof}

\end{document}